\documentclass[11pt]{article}

\usepackage[a4paper,margin=30mm]{geometry}
\usepackage{amsmath,amssymb,amsthm,mathtools}
\usepackage{booktabs}
\usepackage[numbers,sort&compress]{natbib}
\usepackage[hidelinks]{hyperref}

\newcommand{\RR}{\mathbb R}
\newcommand{\Sym}{\operatorname{Sym}}
\newcommand{\Sp}{\operatorname{Sp}}
\newcommand{\Tr}{\operatorname{Tr}}
\newcommand{\Id}{\mathbb I}
\newcommand{\Om}{\Omega}

\newtheorem{theorem}{Theorem}[section]
\newtheorem{proposition}[theorem]{Proposition}
\newtheorem{lemma}[theorem]{Lemma}
\newtheorem{corollary}[theorem]{Corollary}
\theoremstyle{remark}
\newtheorem{remark}[theorem]{Remark}

\title{The Killing Form and Petz Uniqueness of Gaussian Bures Geometry}
\author{Christian Kerskens\\
\small Trinity College Institute of Neuroscience, Trinity College Dublin,
Dublin, Ireland\\
}
\date{}

\begin{document}
\maketitle

\begin{abstract}
For centered bosonic Gaussian states, the covariance pullback of the Bures
cometric differs from the lift-normalized classical covariance
Fisher--Rao cometric by a state-independent bilinear form.  Under the
canonical identification of symmetric covariance covectors with
\(\mathfrak{sp}(2N,\RR)\), this form is the trace form and hence a fixed
multiple of the Killing form.  We prove that, within the normalized
symmetric Petz family, Bures is uniquely selected by requiring such an
additive state-independent symplectic correction.  We determine the full
Williamson-frame spectrum and show that a boundary stratum with \(m\) pure
modes has an \(m^2\)-dimensional cometric kernel isomorphic to
\(\mathfrak u(m)\), while the pure Gaussian orbit remains nondegenerate.
For radial one-mode estimation, ideal heterodyne detection accesses the
exact fraction \((\nu-\hbar/2)/(\nu+\hbar/2)\) of the SLD quantum Fisher
information.  Its vanishing boundary limit reflects finite heterodyne
information relative to a divergent radial quantum Fisher information, not
zero measurement information.  Finally, a minimal Schur realization fixes
the auxiliary inertia and identifies the second Schur complement as the
true covariant vertical block.
\end{abstract}

\section{Introduction}

Centered classical Gaussian measures and centered bosonic Gaussian states are
both parametrized by covariance matrices, but their information geometries
have different meanings.  The former carry the covariance Fisher--Rao
metric of a classical statistical model \cite{Amari2016}.  The latter obey
the Robertson--Schr\"odinger condition and inherit the Bures metric from
quantum fidelity \cite{Bures1969,Uhlmann1976,Braunstein1994,Banchi2015}.
The quantum geometry is therefore not obtained merely by restricting the
classical positive cone.

Gaussian phase-space methods make the symmetric logarithmic derivative and
quantum Fisher information explicit
\cite{Pinel2012,Monras2013,Safranek2019,HuangWilde2026}.  We ask a
complementary structural question: what is the exact contravariant
difference between the classical covariance Fisher tensor and the quantum
Bures tensor?  The answer is particularly simple.  The Fisher cometric is
quadratic in \(\Sigma\), whereas the Bures cometric contains one additional
term fixed by the canonical commutation relations.  Under the linear map
\(P\mapsto\Omega P\), this term is the trace form of
\(\mathfrak{sp}(2N,\RR)\), and therefore a fixed multiple of its Killing
form.

At the covariant level, an additive even--odd splitting of the Gaussian
quantum Fisher information was recently introduced by Chatterjee
\emph{et al.}\ \cite{Chatterjee2026}, separating symplectic-spectrum
changes from frame-deforming dynamics.  The present work takes a
complementary, contravariant approach: the exact difference between the
classical covariance Fisher tensor and the quantum Bures tensor is
identified as a state-independent Killing correction
(Theorem~\ref{thm:killing-form-identity}), whose consequences --- Petz
uniqueness, boundary nullity, heterodyne accessibility, and the minimal
Schur realization --- are developed here.

The Killing identification has three consequences.  First, symplectic invariance makes
the form of every state-independent correction unique up to scale.  More
strongly, within the normalized symmetric Petz family, requiring the
Gaussian covariance cometric to equal a lift-normalized Fisher block plus a
state-independent invariant term selects Bures uniquely.  Second, the
Cartan decomposition fixes the signs of the correction on compact and
noncompact covariance sectors.  These sectors acquire their familiar
passive--active optical meaning only after a physical tangent encoding has
been specified.  Third, the exact Williamson-frame spectrum determines the
boundary stratification and its stabilizer kernel.

The metrological interpretation requires an additional distinction.  The
Killing term is an intrinsic, indefinite correction on covariance
covectors.  The deficit between SLD quantum Fisher information and the
classical Fisher information of a chosen POVM is instead nonnegative,
measurement-dependent, and acts on parameter tangents.  We keep these two
notions separate and use radial heterodyne estimation only as an explicit
comparison between intrinsic and measurement-accessible information.

This algebraic rigidity carries a direct physical consequence. Quantum mechanics does not merely emerge as an arbitrary dynamical correction to classical statistics. Rather, if one requires an information geometry that symmetrically incorporates both macroscopic probability and invariant phase-space mechanics, the mathematics forces a single, unique completion within the Petz family. The symplectic Killing correction dictates that Bures is the canonical geometric bridge between classical statistics and quantum phase space.

Section~\ref{sec:conventions} derives the covariance tensors, compares their algebraic hierarchy, and proves the Killing characterization. Section~\ref{sec:uniqueness} establishes Petz uniqueness. Section~\ref{sec:metrology} separates the intrinsic gap from measurement deficits. Section~\ref{sec:boundary} solves the boundary spectrum, and Section~\ref{sec:classicality} evaluates heterodyne accessibility. Section~\ref{sec:lift} gives the compact Schur construction and states the scope and curvature outlook.

\paragraph*{Relation to a previous version and to companion work.}
A letter-length precursor of this article announced the Killing-form
identity and the minimal Schur construction
(Theorems~\ref{thm:killing-form-identity} and~\ref{thm:minimal-lift})
and is superseded here.\footnote{arXiv:2607.16376v1.  The present
article replaces that record.}  Beyond consolidating those results,
the following are new: the uniqueness of Bures within the normalized
symmetric Petz family (Theorem~\ref{thm:petz-bures-selection}) together
with the BKM boundary diagnostic; the separation of the intrinsic
Killing correction from measurement-induced information deficits
(Section~\ref{sec:metrology}) and the exact heterodyne and
photon-number comparisons (Section~\ref{sec:classicality}); and the
covariant vertical-block theorem with its frame-dependence
counterexample (Theorem~\ref{thm:vertical-block},
Appendix~\ref{app:schur}), which corrects the naive reading of the
fiber block in the earlier version.
The multimode Williamson-frame spectrum and its boundary kernel
(Section~\ref{sec:boundary}) are established here for the first time
in this form.

\section{Gaussian covariance geometry and Killing characterization}
\label{sec:conventions}

Let
\begin{equation}
  \Om=\bigoplus_{k=1}^N
  \begin{pmatrix}0&1\\-1&0\end{pmatrix},
  \qquad
  [\widehat R_a,\widehat R_b]=i\hbar\Om_{ab}.
\end{equation}
For a centered state, set
\begin{equation}
  \Sigma_{ab}=\frac12\left\langle
  \widehat R_a\widehat R_b+\widehat R_b\widehat R_a\right\rangle.
\end{equation}
The faithful centered Gaussian domain is
\begin{equation}
  \mathcal M_Q^\circ=
  \left\{\Sigma\in\Sym_{++}(2N):
  \Sigma+\frac{i\hbar}{2}\Om>0\right\}.
  \label{eq:quantumdomain}
\end{equation}
Williamson's theorem \cite{Williamson1936,deGosson2006,Weedbrook2012} gives
\begin{equation}
  \Sigma=SDS^{\mathsf T},\qquad
  D=\bigoplus_{k=1}^N\nu_k\Id_2,
  \qquad S\Om S^{\mathsf T}=\Om,
  \label{eq:williamson}
\end{equation}
and Eq.~\eqref{eq:quantumdomain} is equivalent to \(\nu_k>\hbar/2\) for every \(k\). The closed domain allows equality and is stratified by the set of saturated Williamson modes.

We identify covariance tangents and cotangents with \(\Sym(2N)\) using the trace pairing \(\langle P,Q\rangle_{\rm tr}=\Tr(PQ)\).

For a centered classical Gaussian density, differentiation of the log likelihood and evaluation of the Gaussian fourth moment give the covariance Fisher metric
\begin{equation}
  g_{{\rm FR},\Sigma}(X,Y)
  =\frac12\Tr(\Sigma^{-1}X\Sigma^{-1}Y).
  \label{eq:fishermetric}
\end{equation}
Its sharp map is \(P\mapsto2\Sigma P\Sigma\), so
\begin{equation}
  g_{{\rm FR},\Sigma}^*(P,Q)=2\Tr(P\Sigma Q\Sigma).
\end{equation}
We compare Bures with the lift-normalized Fisher cometric
\begin{equation}
  \mathcal F_\Sigma^*(P,Q)=8\Tr(P\Sigma Q\Sigma).
  \label{eq:fisher}
\end{equation}
Thus \(\mathcal F^*=4g_{\rm FR}^*\); the factor four is exactly the high-noise normalization implied by the convention \(g_{\rm B}=F_{\rm SLD}/4\).

For a covariance tangent \(X\), the centered Gaussian SLD has a real symmetric quadratic gain \(\mathfrak G_X\) satisfying \cite{Monras2013,Safranek2019,HuangWilde2026}
\begin{equation}
  \Sigma\mathfrak G_X\Sigma
  +\frac{\hbar^2}{4}\Om\mathfrak G_X\Om=X,
  \qquad
  F_{\rm SLD}(X,Y)=\frac12\Tr(\mathfrak G_XY).
  \label{eq:sldgain}
\end{equation}
Introduce the self-adjoint superoperator \(\mathcal S_\Sigma(G)= \Sigma G\Sigma+\frac{\hbar^2}{4}\Om G\Om\). On the faithful domain it is invertible on \(\Sym(2N)\), and
\begin{equation}
  g_{{\rm B},\Sigma}(X,Y)
  =\frac18\Tr\!\left[\mathcal S_\Sigma^{-1}(X)Y\right].
  \label{eq:buresmetric}
\end{equation}
The Bures sharp map is therefore \(P\mapsto8\mathcal S_\Sigma(P)\). In the trace pairing, the Gaussian Bures cometric is
\begin{equation}
  g_{{\rm B},\Sigma}^*(P,Q)
  =8\Tr(P\Sigma Q\Sigma)
   +2\hbar^2\Tr(P\Om Q\Om).
  \label{eq:bures}
\end{equation}
The second term is indefinite on the ambient vector space, while their sum is positive definite on the faithful quantum domain. Notice that the plus sign in Eq.~\eqref{eq:bures} is essential: for a scalar covector, \(\Tr(P\Om P\Om)<0\).

\begin{remark}[Literature comparison and algebraic hierarchy]
The exact algebraic forms of these covariance cometrics are dispersed across the optimal transport, classical statistics, and quantum metrology literature \cite{Bhatia2019,Malago2018,Amari2016,Pinel2012}. Translated into a unified trace-pairing convention, the classical Bures--Wasserstein, classical Fisher--Rao, and quantum Bures cometrics evaluate to:
\begin{align}
  g_{\rm BW}^*(P,P) &= 4\Tr(P\Sigma P), \label{eq:lit-bw} \\
  g_{\rm FR}^*(P,P) &= 2\Tr(P\Sigma P\Sigma), \label{eq:lit-fr} \\
  g_{\rm B}^*(P,P) &= 8\Tr(P\Sigma P\Sigma) + 2\hbar^2\Tr(P\Om P\Om). \label{eq:lit-bures}
\end{align}
Placed side-by-side, a strict algebraic hierarchy becomes visible. The macroscopic optimal-transport cometric is linear in the state \(\Sigma\), while the classical Fisher geometry is strictly quadratic. The quantum Bures geometry is the only one that carries an anomalous, state-independent constant term (\(O(\Sigma^0)\)). The central observation of this paper is that this constant term is not an arbitrary metric deformation, but exactly the invariant Killing form of the phase space.
\end{remark}

\subsection{Cartan decomposition and the invariant Killing correction}
\label{subsec:identity}

\begin{lemma}[Symplectic intertwiner]
The map
\begin{equation}
  \iota:\Sym(2N)\longrightarrow\mathfrak{sp}(2N,\RR),
  \qquad \iota(P)=\Om P,
  \label{eq:iota}
\end{equation}
is a linear isomorphism. If \(\mathcal I(P)=\Om P\Om^{\mathsf T}\), then
its \(+1\) eigenspace maps to the compact factor
\(\mathfrak k\cong\mathfrak u(N)\), while its \(-1\) eigenspace maps to
the noncompact factor \(\mathfrak p\). Their dimensions are \(N^2\) and
\(N^2+N\), respectively.
\end{lemma}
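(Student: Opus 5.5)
The proof is a direct verification in four steps: $\iota$ lands in $\mathfrak{sp}(2N,\RR)$, it is bijective, the involution $\mathcal I$ is intertwined with the Cartan involution, and the two eigenspaces have the stated dimensions.

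\emph{Step 1: $\iota$ is a linear isomorphism.} We use $\mathfrak{sp}(2N,\RR)=\{A: A^{\mathsf T}\Om+\Om A=0\}$, together with $\Om^{\mathsf T}=-\Om$ and $\Om^2=-\Id$. For symmetric $P$ and $A=\Om P$,
\[
A^{\mathsf T}\Om+\Om A=P\Om^{\mathsf T}\Om+\Om^2P=P-P=0,
\]
so $\iota$ maps into $\mathfrak{sp}(2N,\RR)$. Conversely, if $A$ lies in $\mathfrak{sp}(2N,\RR)$, set $P=-\Om A$. Then $P^{\mathsf T}=-A^{\mathsf T}\Om^{\mathsf T}=A^{\mathsf T}\Om=-\Om A=P$, so $P$ is symmetric, and $\iota(P)=-\Om^2A=A$. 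Hence $\iota$ is bijective, with inverse $A\mapsto-\Om A$. Linearity is clear, and both spaces have dimension $N(2N+1)$.

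\emph{Step 2: $\mathcal I$ corresponds to the Cartan involution.} We take the Cartan involution $\theta(A)=-A^{\mathsf T}$, which yields $\mathfrak k=\mathfrak{sp}\cap\mathfrak{so}(2N)\cong\mathfrak u(N)$ (antisymmetric elements) and $\mathfrak p=\mathfrak{sp}\cap\Sym(2N)$ (symmetric elements). The key computation is
\[
\theta(\iota P)=-(\Om P)^{\mathsf T}=P\Om .
\]
On the other side,
\[
\iota(\mathcal I P)=\Om\Om P\Om^{\mathsf T}=-P\Om^{\mathsf T}=P\Om .
\]
Therefore $\iota\circ\mathcal I=\theta\circ\iota$. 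In particular $\mathcal I$ is an involution, and $\iota$ carries its $+1$ eigenspace onto $\mathfrak k$ and its $-1$ eigenspace onto $\mathfrak p$. As a sanity check, the $+1$ eigenspace is $\{P:\Om P=P\Om\}$, and for such $P$ the element $\Om P$ is antisymmetric, as it should be.

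\emph{Step 3: dimensions.} The condition $\Om P\Om^{\mathsf T}=P$ says $P$ commutes with the complex structure $\Om$. Under $\RR^{2N}\cong\mathbb C^N$, such symmetric $P$ are exactly the real parts of Hermitian $N\times N$ matrices, a space of real dimension $N^2$. This is consistent with $\Om P\in\mathfrak u(N)$. The $-1$ eigenspace then has dimension $N(2N+1)-N^2=N^2+N$. Alternatively, it corresponds to complex-symmetric matrices, of real dimension $2\cdot N(N+1)/2=N^2+N$, which cross-checks the count.

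The only delicate point is sign and convention bookkeeping. The paper's $\Om$ is written in the interleaved mode ordering, so the identification with $\mathbb C^N$ must be made block by block: each $2\times2$ block $aI+bJ$ is read as $a+ib$. Care is also needed with the sign of $\Om^{\mathsf T}=-\Om$ in Step 2. Everything else is routine.
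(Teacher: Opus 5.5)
Your proof is correct and follows essentially the same route as the paper's: you verify that $\Om P\in\mathfrak{sp}(2N,\RR)$ and get bijectivity from the explicit inverse $A\mapsto-\Om A$, where the paper uses injectivity plus equal dimensions; you then read off the eigenspaces from the symmetry or antisymmetry of $\Om P$, which your identity $\iota\circ\mathcal I=\theta\circ\iota$ packages neatly, and your Step~3 supplies the dimension count that the paper only asserts. One wording slip: in Step~3 the symmetric matrices commuting with $\Om$ are the real $2N\times 2N$ realifications of Hermitian $N\times N$ matrices, not their ``real parts'' (which would give only dimension $N(N+1)/2$), though your count $N^2$ is right.
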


\begin{proof}
For symmetric \(P\), one has \((\Om P)^{\mathsf T}\Om+\Om(\Om P)=0\), so \(\Om P\) is symplectic. Injectivity and equality of dimensions give bijectivity. If \(\mathcal I(P)=P\), then \(P\) commutes with \(\Om\) and \(\Om P\) is antisymmetric, hence compact. If \(\mathcal I(P)=-P\), then \(\Om P\) is symmetric, hence noncompact.
\end{proof}

Let \(\tau(X,Y)=\Tr(XY)\) on \(\mathfrak{sp}(2N,\RR)\). The Killing
form is \(B_{\rm Kill}(X,Y)=2(N+1)\tau(X,Y)\). It is negative definite
on \(\mathfrak k\) and positive definite on \(\mathfrak p\)
\cite{Helgason1978}.

\begin{theorem}[Killing-form identity]
\label{thm:killing-form-identity}
For every faithful centered Gaussian covariance matrix,
\begin{equation}
  g_{{\rm B},\Sigma}^*(P,Q)
  =\mathcal F_\Sigma^*(P,Q)
   +2\hbar^2\tau\bigl(\iota(P),\iota(Q)\bigr).
  \label{eq:killing}
\end{equation}
Thus the compact covariance sector receives a negative cometric correction and the noncompact sector a positive correction.
\end{theorem}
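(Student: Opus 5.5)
The plan is to reduce the identity to the explicit Bures cometric \eqref{eq:bures}, then read off the signs from the Cartan splitting of the symplectic intertwiner lemma.

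\emph{Step 1: the algebraic identity.} Since $\mathcal F^*_\Sigma(P,Q)=8\Tr(P\Sigma Q\Sigma)$ by \eqref{eq:fisher}, subtracting from \eqref{eq:bures} leaves $2\hbar^2\Tr(P\Om Q\Om)$. On the other side, the definitions of $\tau$ and $\iota$ give
\[
\tau(\iota(P),\iota(Q))=\Tr(\Om P\,\Om Q).
\]
By cyclicity, $\Tr(\Om P\Om Q)=\Tr(P\Om Q\Om)$, which gives \eqref{eq:killing}. If one wants a self-contained derivation of \eqref{eq:bures}, use the sharp map $P\mapsto 8\mathcal S_\Sigma(P)$ read off from \eqref{eq:buresmetric}. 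Then
\[
g^*_{{\rm B},\Sigma}(P,Q)=\Tr\bigl(P\cdot 8\mathcal S_\Sigma(Q)\bigr),
\]
and expanding $\mathcal S_\Sigma(Q)=\Sigma Q\Sigma+\tfrac{\hbar^2}{4}\Om Q\Om$ produces the two terms. The only care needed is that the cometric is the inverse of \eqref{eq:buresmetric} under the trace pairing. This holds because $\mathcal S_\Sigma$ is self-adjoint for $\Tr$ and invertible on the faithful domain.

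\emph{Step 2: the sign statement.} By the symplectic intertwiner lemma, $\Sym(2N)$ splits into the $\pm1$ eigenspaces of $\mathcal I$, and these map onto $\mathfrak k$ and $\mathfrak p$. We argue directly rather than cite the Killing form sign properties.
\begin{itemize}
\item If $\mathcal I(P)=P$, then $\iota(P)=\Om P$ is antisymmetric. Hence $\Tr((\Om P)^2)=-\Tr((\Om P)^{\mathsf T}\Om P)=-\|\Om P\|_F^2<0$ for $P\neq0$.
\item If $\mathcal I(P)=-P$, then $\Om P$ is symmetric. Hence $\Tr((\Om P)^2)=\|\Om P\|_F^2>0$ for $P\neq0$.
\end{itemize}
Since $B_{\rm Kill}=2(N+1)\tau$, this is consistent with the stated signs of the Killing form.

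\emph{Step 3: the correction is block-diagonal.} The two sectors are orthogonal for $\tau$. If $\mathcal I(P)=P$ and $\mathcal I(Q)=-Q$, then $\Om P$ is antisymmetric and $\Om Q$ is symmetric, so $\Tr(\Om P\,\Om Q)=0$. The correction $2\hbar^2\tau(\iota\cdot,\iota\cdot)$ is therefore negative definite on the compact covectors and positive definite on the noncompact ones.

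\emph{Main obstacle.} The only genuinely delicate point is the convention bookkeeping in Step 1. One must confirm that the factor $8$ in the Bures sharp map matches the lift normalization of $\mathcal F^*$, so that the quadratic terms cancel exactly and no residual multiple of $\Tr(P\Sigma Q\Sigma)$ survives. I would check this on the one-mode thermal state $\Sigma=\nu\Id_2$ with $P=\Id_2$. There $\Om\Id_2\Om=-\Id_2$, so $\mathcal S_\Sigma(\Id_2)=(\nu^2-\hbar^2/4)\Id_2$, which is positive exactly on the faithful domain $\nu>\hbar/2$. This confirms both the sign of the $\hbar^2$ term and the normalization.
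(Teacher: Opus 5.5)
Your proposal is correct and is essentially the paper's proof. Cyclicity of the trace turns $\tau(\iota(P),\iota(Q))$ into $\Tr(P\Om Q\Om)$, substitution into Eq.~\eqref{eq:bures} gives the identity, and the signs come from the Cartan splitting of the intertwiner lemma, with your Frobenius-norm computation and $\tau$-orthogonality of the two sectors spelling out the Killing-form sign properties that the paper simply cites.
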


\begin{proof}
Cyclicity of the trace gives \(\tau(\Om P,\Om Q)=\Tr(P\Om Q\Om)\). Substitution in Eq.~\eqref{eq:bures} proves the identity. The signs follow from the Cartan decomposition in the preceding lemma.
\end{proof}

\begin{theorem}[Invariant uniqueness]
\label{thm:invariant-uniqueness}
Suppose that a state-independent symmetric bilinear form \(C\) on \(\Sym(2N)\) is invariant under the cotangent action \(P\longmapsto S^{-\mathsf T}PS^{-1}\) for every \(S\in\Sp(2N,\RR)\). Then there is a constant \(c\) such that
\begin{equation}
  C(P,Q)=c\,\tau\bigl(\iota(P),\iota(Q)\bigr).
  \label{eq:unique-invariant}
\end{equation}
In particular, any symplectically covariant Gaussian cometric that differs from \(\mathcal F^*\) by a state-independent term has a Killing correction.
\end{theorem}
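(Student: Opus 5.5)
The plan is to transport the problem through the intertwiner \(\iota\) of the symplectic-intertwiner lemma and reduce it to the classical fact that \(\mathfrak{sp}(2N,\RR)\) is simple, so that its invariant symmetric bilinear forms are one-dimensional.

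\emph{Step 1: the intertwining property.} Check that \(\iota\) intertwines the cotangent action with the adjoint action. Using \(S^{-\mathsf T}=\Om S\Om^{-1}\) for \(S\in\Sp(2N,\RR)\), which follows from \(S\Om S^{\mathsf T}=\Om\), we get
\[
\iota(S^{-\mathsf T}PS^{-1})=\Om S^{-\mathsf T}PS^{-1}=\Om\Om S\Om^{-1}PS^{-1}=S(\Om P)S^{-1}=\operatorname{Ad}_S\iota(P),
\]
since \(\Om\Om\Om^{-1}=\Om\). Consequently \(\tilde C(X,Y):=C(\iota^{-1}X,\iota^{-1}Y)\) is an \(\operatorname{Ad}(\Sp(2N,\RR))\)-invariant symmetric bilinear form on \(\mathfrak{sp}(2N,\RR)\). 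Differentiating along one-parameter subgroups, which is legitimate because the group is connected and the statement holds for every \(S\), gives infinitesimal invariance:
\[
\tilde C([Z,X],Y)+\tilde C(X,[Z,Y])=0 .
\]

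\emph{Step 2: Schur's lemma.} The trace form \(\tau\) is nondegenerate and invariant, being a multiple of the Killing form of a semisimple algebra. Hence there is a unique linear map \(A\) with \(\tilde C(X,Y)=\tau(AX,Y)\). Invariance of both \(\tilde C\) and \(\tau\) implies that \(A\) commutes with every \(\operatorname{ad}_Z\), so \(A\) is a module endomorphism of the adjoint representation. Since \(\mathfrak{sp}(2N,\RR)\) is a simple real Lie algebra, the adjoint representation is irreducible.

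\emph{Main obstacle.} Schur's lemma over \(\RR\) only says that the commutant is a real division algebra (\(\RR\), \(\mathbb C\) or \(\mathbb H\)), so it does not immediately give \(A=c\,\mathrm{Id}\). To close this gap I will use that \(\mathfrak{sp}(2N,\RR)\) is absolutely simple: its complexification \(\mathfrak{sp}(2N,\mathbb C)\) is simple. The complexified \(A\) then commutes with an irreducible complex representation, so it is a complex scalar. Because \(A\) is real, that scalar is real, and so \(A=c\,\mathrm{Id}\).

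An alternative that avoids this subtlety is to note that \(A\) commutes with \(\operatorname{ad}_H\) for a split Cartan subalgebra, whose root spaces are real and one-dimensional. A real-eigenvalue argument then forces \(A\) to be diagonal on root spaces, and the Weyl group and the brackets \([E_\alpha,E_\beta]\) propagate a single common scalar.

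\emph{Conclusion.} From \(A=c\,\mathrm{Id}\) we get
\[
C(P,Q)=\tilde C(\iota P,\iota Q)=c\,\tau(\iota P,\iota Q),
\]
which is Eq.~\eqref{eq:unique-invariant}. For the final sentence of the theorem, note that \(\mathcal F^*_\Sigma\) transforms covariantly under \(\Sigma\mapsto S\Sigma S^{\mathsf T}\) together with the cotangent action. Hence if a symplectically covariant cometric equals \(\mathcal F^*_\Sigma+C\) with \(C\) independent of \(\Sigma\), then \(C\) must be invariant under the cotangent action. By the first part, \(C\) is then a multiple of \(\tau\circ(\iota\times\iota)\), and therefore of \(B_{\rm Kill}\).
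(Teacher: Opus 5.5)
Your proposal is correct and follows the same route as the paper: it transports \(C\) through \(\iota\) to an \(\operatorname{Ad}\)-invariant form on \(\mathfrak{sp}(2N,\RR)\), then uses Schur's lemma together with absolute simplicity to force proportionality to \(\tau\). You spell out what the paper's three-line proof compresses: the intertwining computation (which tacitly uses that \(\Om^2=-\Id\) is central), the representing operator \(A\), the real-versus-complex Schur subtlety, and the covariance of \(\mathcal F^*\) needed for the final sentence.
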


\begin{proof}
Under \(\iota\), the cotangent action is intertwined with the adjoint action of \(\Sp(2N,\RR)\). Hence \(C\) induces an invariant symmetric bilinear form on \(\mathfrak{sp}(2N,\RR)\). The real symplectic algebra is absolutely simple (its complexification is simple), so by Schur's Lemma every such invariant form is proportional to its Killing form \cite{Helgason1978}, equivalently to \(\tau\).
\end{proof}

The canonical commutation relations fix the specific scale of this correction to \(2\hbar^2\).

The Cartan signs acquire an optical interpretation only after a tangent
model has been specified.  For a Gaussian-unitary encoding
\(X=Y\Sigma+\Sigma Y^{\mathsf T}\), compact generators organize passive
rotations and noncompact generators organize active squeezing.  The signs in
Eq.~\eqref{eq:killing} concern the correction of the dual tensor; they do not
make the total Bures metric indefinite and do not impose a universal ordering
of gate costs or estimation precision.

\begin{remark}[The geometric role of Planck's constant]
While \(\hbar\) has a fixed empirical value in quantum physics, structurally it acts as a symplectic scaling parameter. In Eq.~\eqref{eq:killing}, it calibrates the relative weight of the non-commutative Killing correction against the macroscopic Fisher background, and thereby governs the geometric distance to the pure-state boundary.
\end{remark}

\section{Uniqueness within the symmetric Petz family}
\label{sec:uniqueness}

Quantum monotonicity alone does not select a unique Riemannian metric on
density operators, but rather the Petz family
\cite{Petz1996,PetzSudar1996}.  We now ask whether the additive splitting in
Eq.~\eqref{eq:killing} selects Bures within that family.

Consider the one-mode isotropic family \(\Sigma_\nu=\nu\Id_2\). Set
\begin{equation}
 x=\frac{2\nu}{\hbar}>1,
 \quad
 E_0=\frac{\Id_2}{\sqrt2},
 \quad
 E_+=\frac{1}{\sqrt2}\begin{pmatrix}1&0\\0&-1\end{pmatrix},
 \quad
 E_\times=\frac{1}{\sqrt2}\begin{pmatrix}0&1\\1&0\end{pmatrix}.
 \label{eq:auxiliary-basis}
\end{equation}
A normalized symmetric operator-monotone function obeys \(f(1)=1\) and \(f(t)=tf(t^{-1})\). It defines a Petz information metric with spectral form
\begin{equation}
 I_{f,\rho}(A,A)
 =\sum_{m,n}\frac{|A_{mn}|^2}{p_n f(p_m/p_n)},
 \qquad
 g_f=\frac14 I_f.
 \label{eq:petz-spectral}
\end{equation}
For the geometric thermal spectrum, the spectral sums in Eq.~\eqref{eq:petz-spectral} converge absolutely for any operator-monotone \(f\). The Bures convention corresponds to \(f_{\rm B}(t)=(1+t)/2\). For \(\Sigma_\nu\), the thermal number distribution is geometric, \(p_n=(1-q)q^n\), with \(q=(x-1)/(x+1)\).

\begin{proposition}[Local Gaussian reduction of the Petz family]
\label{prop:petz-reduction}
Define \(a_f(x)=f(q^2)/f_{\rm B}(q^2)\). In the basis of Eq.~\eqref{eq:auxiliary-basis}, the covariance cometric of \(g_f\) has eigenvalues
\begin{align}
 \lambda_{\rm rad}^{f,*}
 &=2\hbar^2(x^2-1),
 \label{eq:petz-radial}\\
 \lambda_{\rm shape}^{f,*}
 &=2\hbar^2(x^2+1)a_f(x)
 \quad\text{(multiplicity two)}.
 \label{eq:petz-shape}
\end{align}
\end{proposition}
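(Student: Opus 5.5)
The plan is to reduce everything to the Bures case, where Eq.~\eqref{eq:bures} gives the answer directly, and then show that changing $f$ rescales only the two shape directions, by exactly $a_f(x)$.

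\textbf{Step 1 (Bures values).} The basis $\{E_0,E_+,E_\times\}$ is orthonormal for the trace pairing. Hence the cometric and metric matrices in this basis are mutual inverses, and it suffices to diagonalize either one. For $\Sigma_\nu=\nu\Id_2$, Eq.~\eqref{eq:bures} gives $8\nu^2\Tr(PQ)+2\hbar^2\Tr(P\Om Q\Om)$, and the two Cartan sectors of the symplectic-intertwiner lemma behave as follows.
\begin{itemize}
\item $E_0$ commutes with $\Om$, so $\Om E_0\Om=-E_0$. This gives $8\nu^2-2\hbar^2=2\hbar^2(x^2-1)$.
\item $E_+$ and $E_\times$ anticommute with $\Om$, so $\Om E\Om=E$. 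This gives $8\nu^2+2\hbar^2=2\hbar^2(x^2+1)$.
\item All cross terms vanish, so the Bures cometric is already diagonal.
\end{itemize}
This is the case $a_{f_{\rm B}}\equiv1$ of the proposition.

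\textbf{Step 2 (diagonality for general $f$).} I will use the phase-rotation symmetry. Passive rotations $R_\theta$ fix $\Sigma_\nu$ and are unitarily implemented, so every Petz metric is invariant under them. On covectors they fix $E_0$ and rotate the pair $(E_+,E_\times)$ by the angle $2\theta$. A symmetric invariant form on $\mathrm{span}\{E_0\}\oplus\mathrm{span}\{E_+,E_\times\}$ must therefore be diagonal: it has no cross term between the two isotypic pieces, and it is a scalar on the rotating $2$-plane. This accounts for the multiplicity two.

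\textbf{Step 3 (radial direction).} Work in the Fock eigenbasis of $\rho_\nu$. Varying $\nu$ only changes the $p_n$, so $\delta\rho$ is diagonal. Only the $m=n$ terms of Eq.~\eqref{eq:petz-spectral} contribute, with $f(1)=1$. The result is the classical Fisher information of the geometric distribution, independent of $f$, which therefore coincides with the Bures value from Step~1. This gives Eq.~\eqref{eq:petz-radial}.

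\textbf{Step 4 (shape directions).} The tangent $X$ dual to $E_+$ at $\Sigma_\nu$ has the form $Y\Sigma+\Sigma Y^{\mathsf T}$ with $\Om Y$ a squeezing generator. It is realized by $\delta\rho=i[K,\rho]$ with $K\propto a^2+a^{\dagger2}$ (or $i(a^2-a^{\dagger2})$ for $E_\times$).
\begin{itemize}
\item Then $\delta\rho_{mn}=i(p_n-p_m)K_{mn}$ is supported only on pairs with $|m-n|=2$.
\item Every such pair has $p_m/p_n=q^{\pm2}$.
\item The symmetry $f(t)=tf(t^{-1})$ gives $p_nf(q^{-2})=p_mf(q^{2})$ whenever $p_m/p_n=q^{-2}$.
\end{itemize}
So each term of Eq.~\eqref{eq:petz-spectral} equals the corresponding Bures term multiplied by the common factor $f_{\rm B}(q^2)/f(q^2)$. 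Hence $g_f(X,X)=a_f(x)^{-1}g_{\rm B}(X,X)$, and inverting the diagonal metric gives Eq.~\eqref{eq:petz-shape}. Absolute convergence of the spectral sums, as asserted for the geometric spectrum, justifies the termwise comparison.

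\textbf{Expected obstacle.} The main difficulty is bookkeeping: confirming that the covariance covector $E_+$ corresponds, via the Bures sharp map, to a tangent that is purely a unitary squeezing commutator. Otherwise diagonal admixtures would have to be excluded. Step~2's symmetry argument helps here: any diagonal (phase-invariant) part of $\delta\rho$ would lie in the $E_0$ isotypic component, so the shape tangents are purely off-diagonal. Step~4 only needs this off-diagonality and the explicit $\pm2$ ladder structure.
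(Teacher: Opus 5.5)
Your proposal is correct and follows essentially the same route as the paper's proof. Radial variations are diagonal in the Fock basis and sample only $f(1)=1$; squeezing tangents connect only $n$ with $n\pm2$, so every term acquires the common factor $f_{\rm B}(q^2)/f(q^2)$ relative to Bures; and the diagonal $3\times3$ metric is then inverted. Your Step~2 phase-rotation argument and your explicit use of $f(t)=tf(t^{-1})$ for the $n\to n-2$ terms add the justification of diagonality and the handling of both ladder directions, which the paper only asserts.
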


\begin{proof}
Radial variations preserve the number basis and sample only \(f(1)=1\). The two traceless covariance directions are generated by quadratic squeezing operators, connecting only \(n\) with \(n\pm2\). Since \(p_{n+2}/p_n=q^2\), every nonzero term receives the same \(f\)-dependent factor. Relative to Bures, the covariant shape weight is multiplied by \(f_{\rm B}(q^2)/f(q^2)\). Because the pullback of the metric onto this three-dimensional tangent subspace is diagonal, its matrix inverse gives the contravariant eigenvalues of the dual metric in Eq.~\eqref{eq:petz-shape}. 
\end{proof}

\begin{theorem}[Bures uniqueness within the Petz family]
\label{thm:petz-bures-selection}
Among normalized symmetric Petz metrics, Bures is the unique metric whose
one-mode Gaussian covariance cometric equals the lift-normalized Fisher
cometric \(\mathcal F^*\) plus a state-independent, symplectically invariant
correction.
\end{theorem}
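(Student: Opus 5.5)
The plan is to test the required decomposition on the one-mode isotropic family \(\Sigma_\nu=\nu\Id_2\), where Proposition~\ref{prop:petz-reduction} gives the full cometric spectrum of every Petz metric. By Theorem~\ref{thm:invariant-uniqueness}, any state-independent, symplectically invariant correction has the form \(c\,\tau(\iota(P),\iota(Q))\) for some constant \(c\). Hence the hypothesis on \(g_f\) reads
\begin{equation}
 g_f^{*}=\mathcal F_\Sigma^*+c\,\tau\bigl(\iota(\cdot),\iota(\cdot)\bigr)
\end{equation}
on all one-mode faithful states, and in particular along \(\Sigma_\nu\). Existence is already settled: for \(f_{\rm B}\), Theorem~\ref{thm:killing-form-identity} gives exactly this form with \(c=2\hbar^2\). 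So only uniqueness needs proof.

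For uniqueness I evaluate both sides in the basis \(E_0,E_+,E_\times\) at \(\Sigma_\nu\). The Fisher part gives \(8\nu^2\Tr(P^2)=2\hbar^2x^2\) on each unit basis element. For the Killing part, \(\Tr(E_0\Om E_0\Om)=-1\), since \(E_0\) commutes with \(\Om\) and lies in the compact sector. For the traceless elements, \(\Om E\Om=E\), so the value is \(+1\) (noncompact sector). The right-hand side is therefore diagonal, with eigenvalue \(2\hbar^2x^2-c\) on the radial direction and \(2\hbar^2x^2+c\) on the two shape directions.

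Comparing with Proposition~\ref{prop:petz-reduction}:
\begin{itemize}
\item The radial eigenvalue \(2\hbar^2(x^2-1)\) is \(f\)-independent, and matching it forces \(c=2\hbar^2\).
\item The shape eigenvalue then requires \(2\hbar^2(x^2+1)a_f(x)=2\hbar^2(x^2+1)\), that is, \(f(q^2)=f_{\rm B}(q^2)\) for every \(x>1\).
\end{itemize}
Since \(q=(x-1)/(x+1)\) sweeps \((0,1)\), \(q^2\) covers \((0,1)\), so \(f=f_{\rm B}\) on \((0,1)\). The symmetry \(f(t)=tf(t^{-1})\), which \(f_{\rm B}\) also satisfies, extends this to \((1,\infty)\), and \(f(1)=1\) settles \(t=1\). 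Thus \(f=f_{\rm B}\) and \(g_f\) is Bures.

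The main obstacle is making sure the ``state-independent'' constant \(c\) cannot vary with \(\nu\). The hypothesis states this directly, and the radial channel pins \(c\) independently of \(f\) for each \(x\), so the argument closes. A second point to check is that the cometric eigenvalues in Proposition~\ref{prop:petz-reduction} use the same trace-pairing normalization as \(\mathcal F^*\). I would verify this by confirming that the Bures case reproduces Eq.~\eqref{eq:bures}, i.e.\ that \(a_{f_{\rm B}}=1\) gives \(2\hbar^2(x^2\pm1)\).
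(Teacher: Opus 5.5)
Your proposal is correct and follows essentially the same route as the paper. You use Theorem~\ref{thm:invariant-uniqueness} to reduce the correction to $c\,\tau(\iota(\cdot),\iota(\cdot))$, let the $f$-independent radial eigenvalue at $\Sigma_\nu$ fix $c=2\hbar^2$, read off $a_f\equiv1$ from the shape eigenvalue, and extend $f=f_{\rm B}$ from $(0,1)$ by the symmetry $f(t)=tf(t^{-1})$; the only cosmetic differences are that you compute the Killing values $\mp1$ on $(E_0,E_+,E_\times)$ explicitly and settle $t=1$ by the normalization $f(1)=1$ rather than by continuity, both of which are fine.
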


\begin{proof}
At \(\Sigma_\nu\), the Fisher cometric has the common eigenvalue \(2\hbar^2x^2\). The radial difference in Eq.~\eqref{eq:petz-radial} is \(-2\hbar^2\). By Theorem~\ref{thm:invariant-uniqueness}, any state-independent symplectically invariant symmetric bilinear form is proportional to the Killing form. Thus, it must take the opposite value \(+2\hbar^2\) in both shape directions. Equation~\eqref{eq:petz-shape} therefore forces \(a_f(x)=1\) for every \(x>1\). Since \(q^2\) covers \((0,1)\), one has \(f=f_{\rm B}\) there; symmetry extends the equality to \((1,\infty)\), and continuity covers \(t=1\). Because a Petz metric is determined entirely by its operator-monotone function, \(f=f_{\rm B}\) forces \(g_f=g_{\rm B}\) on all states, not merely the Gaussian family. The converse is exactly Theorem \ref{thm:killing-form-identity}.
\end{proof}

This statement is confined to the normalized symmetric Petz family.  As a
useful boundary diagnostic, the normalized BKM metric
\(f_{\rm BKM}(t)=(t-1)/\log t\) has shape eigenvalue
\begin{equation}
 \widehat\lambda_{\rm shape}^{\rm BKM}
 =4\hbar^2\frac{x}{\log[(x+1)/(x-1)]} \longrightarrow0
 \qquad(x\downarrow1).
 \label{eq:bkm-boundary}
\end{equation}
Thus regular metrics such as Bures retain a finite pure-orbit cometric,
whereas the nonregular BKM metric develops an additional logarithmic boundary
degeneracy.  This illustrates the force of the invariant-correction
requirement without promoting it to a uniqueness claim outside the stated
Petz class.

\section{Metrological bounds and the intrinsic covariance gap}
\label{sec:metrology}

In quantum metrology, the Bures metric determines the ultimate precision limit for parameter estimation via the Quantum Cram\'er--Rao Bound (QCRB). For a one-parameter model \(\theta\mapsto\Sigma_\theta\) with covariance tangent \(X_\theta=\partial_\theta\Sigma_\theta\), the SLD quantum Fisher information is \(F_Q(\theta) = 4g_{{\rm B},\Sigma_\theta}(X_\theta,X_\theta)\).

A measurement must be specified before classical outcome statistics exist. For a centered Gaussian POVM \(\mathsf M=\{M_y\}\), the classical Fisher information \(I_{\mathsf M}(\theta)\) bounds the estimator variance. The Braunstein--Caves characterization guarantees \(I_{\mathsf M}(\theta)\leq F_Q(\theta)\). For a centered Gaussian measurement yielding a zero-mean classical output covariance \(V_{\mathsf M}(\Sigma)=A\Sigma A^{\mathsf T}+N_{\mathsf M}\), the outcome Fisher information is purely the pullback of the classical covariance Fisher metric:
\begin{equation}
 I_{\mathsf M}(\theta)
 =\frac12\Tr\!\left[
 V_{\mathsf M}^{-1}\dot V_{\mathsf M}
 V_{\mathsf M}^{-1}\dot V_{\mathsf M}
 \right].
 \label{eq:gaussian-povm-fisher}
\end{equation}

\begin{proposition}[Intrinsic and measurement-induced gaps are distinct]
\label{prop:two-gaps}
The Killing identity
\begin{equation}
 g_{{\rm B},\Sigma}^*-\mathcal F_\Sigma^*
 =2\hbar^2\,\tau\circ(\iota\times\iota)
 \label{eq:intrinsic-gap}
\end{equation}
is an equality between intrinsic cometrics on covariance covectors. By contrast, for a fixed POVM the measurement deficit \(\Delta_{\mathsf M}(X) := F_Q(X)-I_{\mathsf M}(X)\) is a nonnegative quadratic form on parameter tangents and depends on the chosen apparatus. No measurement-independent identification of \(\Delta_{\mathsf M}\) with the Killing correction is possible in general.
\end{proposition}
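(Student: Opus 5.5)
The proposition has three parts: (i) the Killing identity is a statement about intrinsic cometrics; (ii) the measurement deficit $\Delta_{\mathsf M}$ is a nonnegative, apparatus-dependent quadratic form on tangents; (iii) no measurement-independent identification of the two is possible in general. I would prove them in that order.

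\emph{Part (i).} This is Theorem~\ref{thm:killing-form-identity} read through Eq.~\eqref{eq:bures} and Eq.~\eqref{eq:fisher}. Both sides are symmetric bilinear forms on $\Sym(2N)$, viewed as covectors via the trace pairing. Neither side refers to a POVM. So the identity is a statement about intrinsic tensors only.

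\emph{Part (ii).} Write $X=\partial_\theta\Sigma_\theta$. Then $F_Q(X)=4g_{\rm B}(X,X)$, and by Eq.~\eqref{eq:gaussian-povm-fisher}
\[
I_{\mathsf M}(X)=\tfrac12\Tr\!\bigl(V^{-1}AXA^{\mathsf T}V^{-1}AXA^{\mathsf T}\bigr).
\]
\begin{itemize}
\item Both expressions are quadratic in $X$, so $\Delta_{\mathsf M}$ is a quadratic form on tangents.
\item Nonnegativity is the Braunstein--Caves inequality applied to each one-parameter curve through $\Sigma$ with tangent $X$, e.g.\ $\Sigma+\theta X$, which stays in the open domain for small $\theta$.
\item The deficit depends on $(A,N_{\mathsf M})$ through $V_{\mathsf M}$.
\end{itemize}

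\emph{Part (iii).} This is the substantive step. I would argue by two incompatibilities.

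First, a type and sign mismatch. $\Delta_{\mathsf M}\ge0$ for every $\mathsf M$. The candidate covariant form obtained from the Killing correction is its transport to tangents by a sharp map, and it is indefinite: the lemma and Theorem~\ref{thm:killing-form-identity} give it opposite signs on the compact and noncompact sectors. A canonical choice of sharp map is the difference of inverses, $(g_{\rm B}^*)^{-1}-(\mathcal F^*)^{-1}$, which is negative semidefinite. In either case a nonnegative form cannot equal an indefinite or negative form unless both vanish. I would make this explicit on the one-mode isotropic family $\Sigma_\nu=\nu\Id_2$ of Proposition~\ref{prop:petz-reduction}.

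Second, apparatus dependence. Take two Gaussian measurements, homodyne (with $A$ a rank-one projection and $N_{\mathsf M}=0$) and heterodyne (with $A=\Id$ and $N_{\mathsf M}=\tfrac{\hbar}{2}\Id$). For the radial tangent $X=\Id_2$ at $\Sigma_\nu$:
\begin{itemize}
\item The heterodyne value is $I=2/(\nu+\hbar/2)^2$.
\item The homodyne value is $1/(2\nu^2)$.
\item The quantum value is $F_Q=8/(4\nu^2-\hbar^2)$, from the radial eigenvalue $\lambda^{f,*}_{\rm rad}=2\hbar^2(x^2-1)$ after inversion with the stated conventions.
\end{itemize}
The two deficits differ, so $\Delta_{\mathsf M}$ cannot equal any single measurement-independent form, in particular any fixed multiple of the Killing term.

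The main obstacle is normalization bookkeeping: matching the sharp-map conventions and the factors of $4$ and $8$ consistently. I would first verify the radial $F_Q$ against Eq.~\eqref{eq:buresmetric} by solving $\mathcal S_\Sigma(G)=\Id$, which gives $G=\Id/(\nu^2-\hbar^2/4)$. The counterexample only needs the two deficits to differ, so any residual constant errors would not affect the conclusion.
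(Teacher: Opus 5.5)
Your treatment of (i), (ii) and the sign/type half of (iii) is the paper's proof. Braunstein--Caves gives $\Delta_{\mathsf M}\ge 0$. The Killing correction is negative on $\mathfrak k$ and positive on $\mathfrak p$, and any congruence transport of it to tangents stays indefinite by Sylvester's law, so no universal equality can hold. The homodyne/heterodyne comparison goes beyond the paper. The paper only asserts that $\Delta_{\mathsf M}$ depends on the apparatus, whereas you exhibit this dependence. That rules out identification with \emph{any} single measurement-independent form, not just the Killing term.

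Several supporting claims need correcting, although none breaks the conclusion.

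First, $(g_{\rm B}^*)^{-1}-(\mathcal F^*)^{-1}$ is not negative semidefinite. At $\Sigma_\nu$ both cometrics are diagonal in $(E_0,E_+,E_\times)$. This difference has eigenvalue $\frac{1}{2\hbar^2}\bigl(\frac{1}{x^2-1}-\frac{1}{x^2}\bigr)>0$ in the radial direction and $\frac{1}{2\hbar^2}\bigl(\frac{1}{x^2+1}-\frac{1}{x^2}\bigr)<0$ in the shape directions. Loewner-order reversal under inversion only applies when $g_{\rm B}^*-\mathcal F^*$ is semidefinite, and here it is not. So this covariant difference is indefinite, consistent with the paper's remark that $g_{\rm B}-g_{\rm FR}$ has no fixed type. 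Indefiniteness still suffices for your argument.

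Second, two of your numbers are off by a factor of two.
\begin{itemize}
\item Your own check $G=\Id/(\nu^2-\hbar^2/4)$ gives $F_Q=\frac12\Tr(G\,\Id)=1/(\nu^2-\hbar^2/4)=4/(4\nu^2-\hbar^2)$.
\item Eq.~\eqref{eq:gaussian-povm-fisher} with $V=(\nu+\hbar/2)\Id_2$ and $\dot V=\Id_2$ gives $I_{\rm het}=1/(\nu+\hbar/2)^2$, as in the paper.
\item Your homodyne value $1/(2\nu^2)$ is right, provided $A=(1\;\;0)$ is a $1\times 2$ row. A $2\times2$ rank-one projector makes $V$ singular.
\end{itemize}
With the corrected values, $I_{\rm het}=I_{\rm hom}$ exactly at $\nu=(1+\sqrt2)\hbar/2$. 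The two radial deficits coincide there and differ only for $\nu\neq(1+\sqrt2)\hbar/2$. That is enough for the claim, but it should be stated that way.

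An alternative without a crossing point is the photon-number measurement of Section~\ref{sec:classicality}. Its radial deficit vanishes identically, while the heterodyne deficit $\hbar/\bigl[(\nu-\hbar/2)(\nu+\hbar/2)^2\bigr]$ is strictly positive for every faithful $\nu$.
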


\begin{proof}
The first statement is Eq.~\eqref{eq:killing}.  The Braunstein--Caves
inequality gives \(\Delta_{\mathsf M}\geq0\).  In contrast, the Killing
form is negative on \(\mathfrak k\) and positive on \(\mathfrak p\).
Moreover, the two forms have different tensor type and the latter requires a
POVM pullback.  These facts exclude a universal equality.
\end{proof}

Accordingly, Cartan type alone does not impose a universal precision ordering
between rotations and squeezing.  Such an ordering depends on the encoded
tangent, the state, and the measurement. 

\section{Williamson spectrum and boundary stratification}
\label{sec:boundary}

\begin{theorem}[Multimode Williamson-frame spectrum]
\label{thm:multimode-spectrum}
At the Williamson representative \(D=\bigoplus_k\nu_k\Id_2\), with \(x_k=2\nu_k/\hbar\), the Bures covariance cometric decomposes into the following sectors:
\begin{center}
\begin{tabular}{lll}
\toprule
sector & weight & multiplicity\\
\midrule
local scalar \(k\) & \(2\hbar^2(x_k^2-1)\) & \(N\) total\\
local traceless \(k\) & \(2\hbar^2(x_k^2+1)\) & \(2N\) total\\
negative-parity pair \(k<\ell\) & \(2\hbar^2(x_kx_\ell-1)\) & \(2\) per pair\\
positive-parity pair \(k<\ell\) & \(2\hbar^2(x_kx_\ell+1)\) & \(2\) per pair\\
\bottomrule
\end{tabular}
\end{center}
The multiplicities sum to \(N(2N+1)\).
\end{theorem}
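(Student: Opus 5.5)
The plan is to diagonalize the quadratic form \(P\mapsto g_{{\rm B},D}^*(P,P)=8\Tr(PDPD)+2\hbar^2\Tr(P\Om P\Om)\) from Eq.~\eqref{eq:bures} directly at the Williamson representative, relative to the trace inner product on \(\Sym(2N)\). Because \(D\) and \(\Om\) are both block diagonal in the \(2\times2\) mode decomposition, the first step is to split \(\Sym(2N)\) into its diagonal blocks \(P_{kk}\in\Sym(2)\) and its off-diagonal blocks \(P_{k\ell}\in\RR^{2\times2}\) for \(k<\ell\), with \(P_{\ell k}=P_{k\ell}^{\mathsf T}\). Writing \(J=\begin{pmatrix}0&1\\-1&0\end{pmatrix}\), both traces decouple over these blocks. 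The local block contributes \(8\nu_k^2\Tr(P_{kk}^2)+2\hbar^2\Tr(P_{kk}JP_{kk}J)\). A pair block contributes \(2\cdot 8\nu_k\nu_\ell\Tr(P_{k\ell}P_{k\ell}^{\mathsf T})+2\cdot2\hbar^2\Tr(P_{k\ell}JP_{k\ell}^{\mathsf T}J)\). Once each block is diagonalized, the factor \(\hbar^2/4\) is pulled out, and we use \(8\nu_k\nu_\ell=2\hbar^2x_kx_\ell\).

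The second step treats the local blocks, and it is exactly the computation behind Proposition~\ref{prop:petz-reduction} with \(f=f_{\rm B}\). For symmetric \(2\times2\) matrices we have \(J P J=-P\) for \(P\propto\Id_2\) and \(JPJ=P\) for traceless \(P\), since \(J^{\mathsf T}PJ=(\det\text{-adjugate})\). In the orthonormal basis \(E_0,E_+,E_\times\), the local form therefore has eigenvalue \(2\hbar^2x_k^2-2\hbar^2\) on \(E_0\) and \(2\hbar^2x_k^2+2\hbar^2\) on \(E_+\) and \(E_\times\). This gives \(N\) scalar and \(2N\) traceless weights.

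The third step treats the pair blocks, which I expect to be the main obstacle, mainly in fixing the normalization and the sign convention. The relevant involution is \(M\mapsto -JMJ^{\mathsf T}\cdots\); more concretely, one uses the map \(M\mapsto J M J^{\mathsf T}\) on \(\RR^{2\times2}\), through which \(\Tr(MJM^{\mathsf T}J)=-\Tr(M\,(JMJ^{\mathsf T})^{\mathsf T})\). This involution is an isometry for the Frobenius product with eigenvalues \(\pm1\), and each eigenspace is two-dimensional. The \(+1\) eigenspace consists of the matrices commuting with \(J\), spanned by \(\Id_2\) and \(J\). The \(-1\) eigenspace consists of the anticommuting matrices, spanned by \(\sigma_z\) and \(\sigma_x\). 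On these eigenspaces the second trace equals \(\mp\Tr(MM^{\mathsf T})\).

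Normalizing the embedded symmetric matrix to unit Frobenius norm (the two symmetric copies give \(\Tr P^2=2\Tr MM^{\mathsf T}\)), the weights become \(2\hbar^2(x_kx_\ell-1)\) on the commuting (negative-parity, compact-type) pair and \(2\hbar^2(x_kx_\ell+1)\) on the anticommuting pair, with two of each per pair. I will double-check the parity labels against the involution \(\mathcal I\) of the symplectic intertwiner lemma, since the \(+1\) eigenspace of \(\mathcal I\) receives the negative correction, consistent with Theorem~\ref{thm:killing-form-identity}.

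Finally, counting gives \(N+2N+4\binom N2=N(2N+1)=\dim\Sym(2N)\), which confirms that the listed sectors exhaust the space.
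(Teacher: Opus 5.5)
Your proposal is correct and follows essentially the paper's own argument. You split \(\Sym(2N)\) at \(D\) into local and pair blocks, diagonalize the \(\Om\)-conjugation involution on each (scalar versus traceless locally, \(\operatorname{span}\{\Id_2,J\}\) versus \(\operatorname{span}\{X,Z\}\) on cross blocks), and substitute into Eq.~\eqref{eq:bures}; your involution \(M\mapsto JMJ^{\mathsf T}\) is just the negative of the paper's \(A\mapsto JAJ\), so your \(+1\) eigenspace is the paper's negative-parity sector and your labels, weights and the normalization \(\Tr P^2=2\Tr(MM^{\mathsf T})\) all agree with the paper.
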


\begin{proof}
For a diagonal block, the involution \(P\mapsto\Om P\Om\) has eigenvalue
\(-1\) on the scalar matrix and \(+1\) on the two traceless symmetric
matrices.  For a cross block \(A\in M_2(\RR)\), write
\[
 J=\begin{pmatrix}0&1\\-1&0\end{pmatrix},\quad
 X=\begin{pmatrix}0&1\\1&0\end{pmatrix},\quad
 Z=\begin{pmatrix}1&0\\0&-1\end{pmatrix}.
\]
The involution \(A\mapsto JAJ\) has negative eigenspace
\(\operatorname{span}\{\Id_2,J\}\) and positive eigenspace
\(\operatorname{span}\{X,Z\}\), both two-dimensional. Substitution into
Eq.~\eqref{eq:bures} gives the four weights. General \(\Sigma\) follows by
Williamson congruence.
\end{proof}

\begin{corollary}[Boundary nullity and stabilizer]
\label{cor:nullity}
Suppose exactly \(m\) Williamson modes satisfy \(x_k=1\) (pure modes). Then
\begin{equation}
  \dim\ker g_{{\rm B},\Sigma}^*=m^2,
  \qquad
  \ker g_{{\rm B},\Sigma}^*\cong\mathfrak u(m).
  \label{eq:nullity}
\end{equation}
The complementary pure-orbit sector has dimension \(m(m+1)\), equal to \(\dim[\Sp(2m,\RR)/U(m)]\).
\end{corollary}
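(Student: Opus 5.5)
The plan is to read the kernel directly off Theorem~\ref{thm:multimode-spectrum}. By Williamson congruence, $\Sigma=SDS^{\mathsf T}$, and the cotangent action $P\mapsto S^{-\mathsf T}PS^{-1}$ carries the cometric at $D$ to the cometric at $\Sigma$. The Fisher term is covariant because $\Tr(P\Sigma Q\Sigma)$ transforms correctly. The Killing term is invariant by Theorem~\ref{thm:invariant-uniqueness}. Hence the kernel dimension and its isomorphism type can be computed at $D$. The formula in Eq.~\eqref{eq:bures} extends continuously to the closed domain, so the sector table remains valid when some $x_k=1$. All weights are then $\ge 0$, since $x_k\ge1$.

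Next I list the vanishing weights. A weight vanishes only if it has the form $2\hbar^2(\cdot-1)$ with argument equal to one.
\begin{itemize}
\item Local scalar sectors vanish exactly for the $m$ saturated modes.
\item Negative-parity pair sectors vanish exactly when $x_kx_\ell=1$. Because $x_k,x_\ell\ge1$, this happens if and only if both modes are pure, which gives $2\binom m2=m(m-1)$ directions.
\item All $+1$ weights are strictly positive.
\end{itemize}
The total is $m+m(m-1)=m^2$.

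For the identification with $\mathfrak u(m)$, I note that the vanishing sectors are precisely the $-1$ eigenspace of $P\mapsto\Om P\Om$, restricted to the pure-mode block. In the paper's convention $\mathcal I(P)=\Om P\Om^{\mathsf T}=-\Om P\Om$, this is the $+1$ eigenspace of $\mathcal I$ on $\Sym(2m)$. By the Symplectic intertwiner lemma, $\iota$ maps it onto the compact factor $\mathfrak k\cong\mathfrak u(m)$ of $\mathfrak{sp}(2m,\RR)$. Concretely, diagonal scalars and the cross-block span $\{\Id_2,J\}$ are exactly the symmetric matrices commuting with $\Om$, and this space has dimension $m^2$. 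I will also check that it is the Lie algebra of the stabilizer $U(m)$ of the pure block $\tfrac\hbar2\Id_{2m}$ under the adjoint (coadjoint) identification. This justifies the word ``stabilizer''.

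The complementary statement follows in the same way. The remaining directions inside the pure $2m$ block are the $-1$ eigenspace of $\mathcal I$, which is the noncompact factor $\mathfrak p$. Its dimension is $\dim\Sym(2m)-m^2=m(2m+1)-m^2=m(m+1)$. On these directions the weights are $2\hbar^2(1+1)=4\hbar^2>0$, so the pure orbit stays nondegenerate. This matches $\dim\Sp(2m,\RR)-\dim U(m)=m(2m+1)-m^2$.

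I expect the main obstacle to be justifying the use of the cometric at boundary points. The Bures metric itself blows up there, so the statement must refer to the polynomial cometric of Eq.~\eqref{eq:bures} extended by continuity. I also need to confirm that Williamson congruence still holds when $\nu_k=\hbar/2$. I would address this by noting that Eq.~\eqref{eq:bures} is polynomial in $\Sigma$, that Williamson's theorem holds for all positive definite $\Sigma$, and that the transformation identity is purely algebraic. Taken together, these make both the congruence argument and the sector table valid on the closed domain.
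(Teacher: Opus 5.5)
Your proposal is correct and follows the paper's proof. You count the zero weights in the Williamson-frame table, giving $m+2\binom m2=m^2$, and you identify the null directions with the symmetric commutant of $\Om_m$, which $\iota$ carries onto $\mathfrak{sp}(2m,\RR)\cap\mathfrak{so}(2m)=\mathfrak u(m)$. Your extra care about extending the polynomial cometric to the closed domain is welcome. One small correction: the congruence invariance of the Killing term comes from $\iota$ intertwining the cotangent and adjoint actions, not from Theorem~\ref{thm:invariant-uniqueness}, which takes that invariance as a hypothesis.
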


\begin{proof}
The kernel contains one scalar direction for each pure mode and two
negative-parity directions for each pair of pure modes, giving
\(m+2\binom m2=m^2\). On the pure block,
the Bures sharp operator is
\(\mathcal G_{\rm B}(P)=2\hbar^2(P+\Om_mP\Om_m)\), so a symmetric \(P\)
is null precisely when it commutes with \(\Om_m\). The map
\(P\mapsto\Om_mP\) identifies this symmetric commutant with
\(\mathfrak{sp}(2m,\RR)\cap\mathfrak{so}(2m)=\mathfrak u(m)\).
\end{proof}

At a rank-changing boundary, the continuous interior SLD tensor must be
distinguished from the metric intrinsic to the fixed-rank stratum
\cite{Safranek2017}. The kernel above describes the smooth algebraic extension of the dual polynomial \(g_{{\rm B},\Sigma}^*\) to the closed domain boundary, which geometrically removes spectrum-changing directions from the intrinsic pure-state stratum; it does not make the pure Gaussian
orbit blind.  On \(\Sp(2m,\RR)/U(m)\), Bures reduces, up to convention, to
the nondegenerate Fubini--Study metric \cite{Bengtsson2017}.

\begin{remark}[Relation to the even--odd splitting of the Gaussian QFI]
\label{rem:chatterjee}
Chatterjee \emph{et al.}~\cite{Chatterjee2026} decompose the QFI of
centered multimode Gaussian states additively by \(\Om\)-parity of the
covariant tangent, using the same Cartan decomposition of
\(\mathfrak{sp}(2N,\RR)\): the spectrum-changing parity sector vanishes
identically on the pure-state manifold, while the complementary sector
restricts there to the Siegel upper half-space metric.
At a Williamson frame, the Bures sharp map preserves \(\Om\)-parity, so
the sector decomposition of Theorem~\ref{thm:multimode-spectrum} is the
contravariant image of their covariant splitting, and the nondegenerate
pure-orbit metric on \(\Sp(2m,\RR)/U(m)\) in
Corollary~\ref{cor:nullity} is consistent with their Siegel-metric
identification.  The two results are nevertheless of different type.
The even--odd splitting is a state-dependent orthogonal decomposition
of the covariant quadratic form on parameter tangents.  The present
characterization concerns the contravariant difference from the
classical Fisher cometric, which is state-independent and, by the
invariant-uniqueness theorem, algebraically rigid; no covariant
analogue exists, since the covariant difference
\(g_{\rm B}-g_{\rm FR}\) is state-dependent and of no fixed invariant
type.  The Petz selection of Theorem~\ref{thm:petz-bures-selection}
and the Schur inertia realization of Section~\ref{sec:lift} likewise
have no counterpart in the covariant splitting.
The intermediate strata are also resolved differently: for \(m<N\) pure modes, Corollary~\ref{cor:nullity} identifies the kernel as the \(m^2\)-dimensional stabilizer algebra \(\mathfrak u(m)\), refining the all-modes-pure statement to arbitrary partial purity.
\end{remark}

\section{Classical limit and measurement-induced statistics}
\label{sec:classicality}

We next compare the intrinsic classical limit with the statistics generated
by one particular detector.  This comparison does not identify the Killing
correction with a POVM deficit.

\begin{proposition}[Intrinsic high-noise suppression]
\label{prop:high-noise}
On a sector on which \(\mathcal F_\Sigma^*\) is uniformly nonzero,
\begin{equation}
  \frac{\left|g_{{\rm B},\Sigma}^*(P,P)
  -\mathcal F_\Sigma^*(P,P)\right|}
  {\mathcal F_\Sigma^*(P,P)}
  =O\!\left(\frac{\hbar^2}{\nu_{\min}^2}\right)
  \longrightarrow0
  \qquad
  \left(\frac{\nu_{\min}}{\hbar}\to\infty\right).
  \label{eq:high-noise-suppression}
\end{equation}
\end{proposition}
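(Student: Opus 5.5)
The plan is to bound numerator and denominator separately. By Theorem~\ref{thm:killing-form-identity}, the numerator is exactly \(2\hbar^2|\Tr(P\Om P\Om)|\), which does not depend on the state. The denominator \(\mathcal F_\Sigma^*(P,P)=8\Tr(P\Sigma P\Sigma)\) is quadratic in \(\Sigma\). So we need an upper bound of the form \(|\Tr(P\Om P\Om)|\lesssim\|P\|^2\) and a lower bound of the form \(\Tr(P\Sigma P\Sigma)\gtrsim\lambda_{\min}(\Sigma)^2\|P\|^2\), with \(\lambda_{\min}(\Sigma)\) controlled by \(\nu_{\min}\).

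For the numerator, \(\Om\) is orthogonal. Cauchy--Schwarz in the Hilbert--Schmidt inner product then gives \(|\Tr(P\Om P\Om)|\le\|P\|_F\|\Om P\Om\|_F=\|P\|_F^2\).

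For the denominator, write \(\Tr(P\Sigma P\Sigma)=\|\Sigma^{1/2}P\Sigma^{1/2}\|_F^2\ge\lambda_{\min}(\Sigma)^2\|P\|_F^2\). The difficulty is relating \(\lambda_{\min}(\Sigma)\) to \(\nu_{\min}\). In the Williamson frame \(\Sigma=D\) they coincide, and there the ratio is at most \(\hbar^2/(4\nu_{\min}^2)\). For general \(\Sigma=SDS^{\mathsf T}\), a strongly squeezed \(S\) makes \(\lambda_{\min}(\Sigma)\) much smaller than \(\nu_{\min}\).

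This frame issue is where I expect the main obstacle, and I would handle it by symplectic invariance rather than by Euclidean eigenvalue bounds. Both \(\tau\circ(\iota\times\iota)\) (Theorem~\ref{thm:invariant-uniqueness}) and \(\mathcal F^*\) are invariant under the cotangent action. Indeed \(\Tr(P\Sigma Q\Sigma)\) is unchanged under \(\Sigma\mapsto S\Sigma S^{\mathsf T}\), \(P\mapsto S^{-\mathsf T}PS^{-1}\). Setting \(P'=S^{\mathsf T}PS\), the ratio at \((\Sigma,P)\) therefore equals the ratio at \((D,P')\). I would take the sector hypothesis ``uniformly nonzero'' to refer to this transported covector, so the bound holds uniformly in the frame.

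In the Williamson frame the sector decomposition of Theorem~\ref{thm:multimode-spectrum} makes everything explicit. Both forms are diagonal in the same basis. On each sector the ratio of the Killing term to the Fisher term is \(1/x_k^2\) for local sectors, or \(1/(x_kx_\ell)\) for pair sectors; this follows from reading off the weights \(2\hbar^2(x_kx_\ell\pm1)\) against the Fisher part \(2\hbar^2x_kx_\ell\). For a general \(P'\), the numerator is a signed combination of the sector components, each weighted by \(2\hbar^2\). Its absolute value is therefore at most \(\max(1/(x_kx_\ell))\) times the Fisher value, and \(\max(1/(x_kx_\ell))\le 1/x_{\min}^2=\hbar^2/(4\nu_{\min}^2)\). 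This gives the stated \(O(\hbar^2/\nu_{\min}^2)\) with explicit constant \(1/4\), and it tends to zero as \(\nu_{\min}/\hbar\to\infty\).
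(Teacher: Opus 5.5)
Your proof is correct, and after its first two steps it takes a different route from the paper's. The paper's proof is essentially your preliminary Frobenius-norm sketch. It bounds the Killing term by a constant times \(\hbar^2\lVert P\rVert_F^2\). It then takes the lower bound \(\mathcal F_\Sigma^*(P,P)\ge c\,\nu_{\min}^2\lVert P\rVert_F^2\) as the content of the hypothesis (``uniformly nondegenerate Fisher sector''). As you observe, that lower bound is not automatic: \(\lambda_{\min}(\Sigma)\) can be arbitrarily small compared with \(\nu_{\min}\) for squeezed \(\Sigma\). The paper therefore hides the frame dependence in its hypothesis and in an unspecified constant.

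You instead use the fact that the ratio is invariant under the simultaneous action \(\Sigma\mapsto S\Sigma S^{\mathsf T}\), \(P\mapsto S^{-\mathsf T}PS^{-1}\). You move to the Williamson frame and read off from Theorem~\ref{thm:multimode-spectrum} that both forms are diagonal in one orthonormal basis, with Fisher weights \(2\hbar^2x_kx_\ell\) and Killing weights \(\pm2\hbar^2\). This last step holds for every nonzero \(P'\). So the sector hypothesis is never used, and your reinterpretation of it via the transported covector is unnecessary. You have in fact proved
\[
 \frac{\left|g_{{\rm B},\Sigma}^*(P,P)-\mathcal F_\Sigma^*(P,P)\right|}{\mathcal F_\Sigma^*(P,P)}
 \le\frac{\hbar^2}{4\nu_{\min}^2}
\]
for every faithful \(\Sigma\) and every \(P\neq0\). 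The constant is sharp: equality holds, for example, on the local scalar direction of a mode with \(\nu_k=\nu_{\min}\).

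The paper's route is shorter but conditional and non-explicit. Yours costs the Williamson reduction and buys a hypothesis-free, frame-independent, optimal bound. The same inequality also follows compactly from \(\Sigma\pm i\nu_{\min}\Om\geq0\). Applying \(\Tr(PAPB)\ge0\) with \(A=\Sigma+i\nu_{\min}\Om\) and \(B=\Sigma\pm i\nu_{\min}\Om\) gives \(\Tr(P\Sigma P\Sigma)\ge\nu_{\min}^2\left|\Tr(P\Om P\Om)\right|\).
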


\begin{proof}
The correction in Eq.~\eqref{eq:killing} is independent of the scale of
\(\Sigma\) and is bounded by a constant times
\(\hbar^2\lVert P\rVert_F^2\).  On a uniformly nondegenerate Fisher sector,
Eq.~\eqref{eq:fisher} is bounded below by a constant times
\(\nu_{\min}^2\lVert P\rVert_F^2\).  Their quotient gives the result.
\end{proof}

Thus the classical covariance Fisher geometry is recovered asymptotically
at high noise.  This is an intrinsic, measurement-independent statement.

To see how this manifests in the laboratory, consider the extraction of the radial (purity) parameter \(\nu\) via an ideal heterodyne measurement. Set \(a=\hbar/2\) and \(\Sigma_\nu=\nu\Id_2\) for \(\nu>a\). In standard quadrature normalization, ideal heterodyne detection adds one vacuum unit to the simultaneous phase-space readout, yielding the classical output covariance \(V_{\rm het}(\nu)=(\nu+a)\Id_2\). Its classical Fisher information derived from Eq.~\eqref{eq:gaussian-povm-fisher} is \(I_{\rm het}(\nu)=(\nu+a)^{-2}\). 

The fundamental radial SLD quantum Fisher information, derived from the exact gain equations (see Appendix~\ref{app:onemode}), is \(F_Q(\nu)=(\nu^2-a^2)^{-1}\). 

\begin{proposition}[Exact heterodyne accessibility ratio]
\label{prop:heterodyne-ratio}
For the faithful radial family, the exact fraction of the fundamental quantum Fisher information accessible to ideal heterodyne detection is
\begin{equation}
 \frac{I_{\rm het}(\nu)}{F_Q(\nu)}
 =\frac{\nu-a}{\nu+a}.
 \label{eq:heterodyne-ratio}
\end{equation}
\end{proposition}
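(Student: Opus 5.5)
The ratio follows from two one-parameter Fisher informations computed on the radial family \(\Sigma_\nu=\nu\Id_2\), \(a=\hbar/2\). I would derive each independently from results already stated, rather than merely quoting the values given in the text, and then divide.

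\emph{Heterodyne side.} The output covariance is \(V_{\rm het}(\nu)=(\nu+a)\Id_2\), so \(\dot V=\Id_2\) and \(V^{-1}\dot V=(\nu+a)^{-1}\Id_2\). Eq.~\eqref{eq:gaussian-povm-fisher} then gives
\[
I_{\rm het}=\tfrac12\Tr\bigl[(\nu+a)^{-2}\Id_2\bigr]=(\nu+a)^{-2}.
\]

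\emph{Quantum side.} The tangent is \(X=\Id_2\). I would solve the gain equation \eqref{eq:sldgain} with the scalar ansatz \(\mathfrak G_X=g\Id_2\). Since \(\Om\Id_2\Om=-\Id_2\), the equation reduces to \((\nu^2-a^2)g=1\). This ansatz is legitimate because \(\mathcal S_\Sigma\) is invertible on the faithful domain, so the solution is unique. Then
\[
F_Q=\tfrac12\Tr(\mathfrak G_X X)=\tfrac12\cdot 2g=(\nu^2-a^2)^{-1}.
\]

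The main obstacle is keeping the normalizations consistent, since it is easy to be off by a factor of \(4\) or \(2\). As a cross-check I would also route through the metric: \(F_Q=4g_{\rm B}(X,X)=\tfrac12\Tr[\mathcal S^{-1}(X)X]\), which gives the same \(F_Q\). This agrees with the radial cometric eigenvalue \(2\hbar^2(x^2-1)\) of Proposition~\ref{prop:petz-reduction}. There \(X=\sqrt2 E_0\), \(2\hbar^2(x^2-1)=8(\nu^2-a^2)\), and inverting gives \(g_{\rm B}(E_0,E_0)=1/(8(\nu^2-a^2))\), hence \(F_Q=4\cdot2\cdot g_{\rm B}(E_0,E_0)=(\nu^2-a^2)^{-1}\). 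I would also confirm that the classical Fisher normalization in Eq.~\eqref{eq:gaussian-povm-fisher} is the standard one for the same quadrature units, so both informations are per unit \(\nu\).

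Dividing and factoring \(\nu^2-a^2=(\nu-a)(\nu+a)\) gives
\[
\frac{I_{\rm het}}{F_Q}=\frac{(\nu-a)(\nu+a)}{(\nu+a)^2}=\frac{\nu-a}{\nu+a},
\]
which is Eq.~\eqref{eq:heterodyne-ratio}. As a consistency check, the ratio lies in \((0,1)\) for \(\nu>a\), as the Braunstein--Caves bound requires.
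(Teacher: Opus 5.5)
Your proposal is correct and follows essentially the same route as the paper. You obtain $I_{\rm het}=(\nu+a)^{-2}$ from Eq.~\eqref{eq:gaussian-povm-fisher} with $\dot V_{\rm het}=\Id_2$, and the radial QFI $(\nu^2-a^2)^{-1}$ from the scalar gain $\mathfrak G=\Id_2/(\nu^2-a^2)$, which is the $r=0$ case of Appendix~\ref{app:onemode}; dividing these gives the ratio, and your cross-check via the radial cometric eigenvalue $2\hbar^2(x^2-1)=8(\nu^2-a^2)$ is consistent but not needed.
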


\begin{proof}
For the two-dimensional Gaussian output,
\(\dot V_{\rm het}=\Id_2\), so
Eq.~\eqref{eq:gaussian-povm-fisher} gives \(I_{\rm het}\).  Division by
the radial QFI gives Eq.~\eqref{eq:heterodyne-ratio}.
\end{proof}

The high-noise limit of Eq.~\eqref{eq:heterodyne-ratio} realizes the
classical covariance limit for this parameter and detector.  At the pure
boundary the ratio tends to zero for a different reason: the heterodyne
information remains finite, \(I_{\rm het}(a)=1/(4a^2)\), whereas the
faithful-state radial QFI diverges.  Hence heterodyne detection does not
inherit the \(m^2\)-dimensional cometric kernel as vanishing classical Fisher
information.

Nor does the kernel imply a universal symmetry-breaking requirement for
measurements.  Indeed, the radial thermal probabilities
\(p_n=(1-q)q^n\), with \(q=(\nu-a)/(\nu+a)\), give the phase-invariant
photon-number Fisher information
\(I_{\rm num}=1/(\nu^2-a^2)=F_Q\).  Measurement accessibility depends
jointly on the tangent encoding and the POVM; breaking the stabilizer
symmetry is neither generally necessary nor sufficient.

\section{Minimal Schur realization and scope}
\label{sec:lift}

Equation \eqref{eq:killing} presents a conceptual puzzle: the Bures cometric is strictly positive definite on the faithful domain, yet it additively incorporates the Killing form, which is indefinite (negative on the compact sector (\(\mathfrak{k}\)) and positive on the noncompact sector (\(\mathfrak{p}\))). Geometrically, such indefinite corrections arise naturally when a positive base metric is formed as a quotient or by 'integrating out' an auxiliary space. We now construct the exact algebraic realization of this process via block elimination.  No
physical purification dynamics or Riemannian quotient interpretation is
assumed.

Let \(V=\Sym(2N)\),
\(\mathfrak g=\mathfrak{sp}(2N,\RR)\), and define
\begin{equation}
 K_f=-\tau,\qquad
 \iota_\hbar(P)=\frac{\hbar}{2}\Om P,\qquad
 R(P,\pi)=2\sqrt2\,K_f\bigl(\iota_\hbar(P),\pi\bigr).
\end{equation}
On \(V\oplus\mathfrak g\), set
\begin{equation}
 G^*=\begin{pmatrix}
 \mathcal F_\Sigma^*&R\\
 R^{\mathsf T}&K_f
 \end{pmatrix}.
 \label{eq:lift}
\end{equation}

\begin{theorem}[Minimal Schur realization]
\label{thm:minimal-lift}
The Schur complement of \(K_f\) is the Bures cometric,
\begin{equation}
 \mathcal F_\Sigma^*-RK_f^{-1}R^{\mathsf T}
 =g_{{\rm B},\Sigma}^*.
 \label{eq:fiberschur}
\end{equation}
Every nondegenerate Schur realization of the full Killing correction on an
auxiliary space \(W\) satisfies \(\dim W\geq N(2N+1)\). At equality its
auxiliary form \(\widetilde K\) is fixed up to congruence and has
\begin{equation}
 \operatorname{In}(\widetilde K)=(N^2,N^2+N),
 \label{eq:fiber-inertia}
\end{equation}
where the entries count positive and negative dimensions.  The canonical
construction above realizes this minimum with \(\widetilde K=K_f\).
\end{theorem}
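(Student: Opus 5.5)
The plan has two parts: a direct computation for the Schur identity \eqref{eq:fiberschur}, and a rank and inertia argument (Sylvester's law) for the minimality and uniqueness claims.

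\textbf{Schur identity.} First I check that \(K_f=-\tau\) is nondegenerate on \(\mathfrak g\). This holds because \(\tau\) is a nonzero multiple of the Killing form of the simple algebra \(\mathfrak{sp}(2N,\RR)\), so \(K_f^{-1}\) exists. Next I view \(R\) as the map \(V\to\mathfrak g^*\), \(P\mapsto 2\sqrt2\,K_f(\iota_\hbar(P),\cdot)\). Then \(K_f^{-1}R^{\mathsf T}Q=2\sqrt2\,\iota_\hbar(Q)\), and
\[
 (RK_f^{-1}R^{\mathsf T})(P,Q)=8\,K_f\bigl(\iota_\hbar(P),\iota_\hbar(Q)\bigr)
 =-8\cdot\frac{\hbar^2}{4}\,\tau(\Om P,\Om Q)=-2\hbar^2\,\tau\bigl(\iota(P),\iota(Q)\bigr).
\]
Subtracting this from \(\mathcal F_\Sigma^*\) gives \(\mathcal F_\Sigma^*+2\hbar^2\tau\circ(\iota\times\iota)\). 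By Theorem~\ref{thm:killing-form-identity} this is \(g_{{\rm B},\Sigma}^*\), which proves \eqref{eq:fiberschur}. The canonical construction uses \(W=\mathfrak g\), so \(\dim W=N(2N+1)\), and this shows the bound is attained.

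\textbf{Lower bound on \(\dim W\).} A nondegenerate Schur realization on \(W\) consists of a nondegenerate symmetric form \(\widetilde K\) on \(W\) and a coupling \(\widetilde R:V\to W^*\) with
\[
 -\widetilde R\widetilde K^{-1}\widetilde R^{\mathsf T}=2\hbar^2\,\tau\circ(\iota\times\iota)=:C.
\]
By the symplectic intertwiner lemma, \(\iota\) is an isomorphism, so \(C\) is nondegenerate on \(V\) and has rank \(\dim V=N(2N+1)\). Since \(\widetilde R\widetilde K^{-1}\widetilde R^{\mathsf T}\) factors through \(W\), its rank is at most \(\dim W\). Hence \(\dim W\geq N(2N+1)\).

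\textbf{Equality case and inertia.} If \(\dim W=N(2N+1)\), the factorization forces \(\widetilde R:V\to W^*\) to be injective, hence bijective. Therefore \(\widetilde K^{-1}\), viewed as a form on \(W^*\), is congruent through \(\widetilde R\) to \(-C\). Since a form and its inverse have the same inertia, \(\widetilde K\) is congruent to \(-C\), and by Sylvester's law it is fixed up to congruence by the inertia of \(-C\). That inertia comes from the Cartan decomposition in the intertwiner lemma:
\begin{itemize}
 \item \(\tau\) is negative definite on \(\mathfrak k\), of dimension \(N^2\);
 \item \(\tau\) is positive definite on \(\mathfrak p\), of dimension \(N^2+N\);
 \item these two subspaces are \(\tau\)-orthogonal.
\end{itemize}
So \(C\) has inertia \((N^2+N,N^2)\), and \(-C\), hence \(\widetilde K\), has inertia \((N^2,N^2+N)\). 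This is \eqref{eq:fiber-inertia}, and it agrees with \(K_f=-\tau\), which is positive on \(\mathfrak k\) and negative on \(\mathfrak p\).

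\textbf{Main obstacle.} The delicate step is making ``Schur realization'' precise enough that the rank argument applies. I will define it as a symmetric block form whose auxiliary block is invertible and whose Schur complement correction equals exactly the Killing term. I will also track the sign and duality conventions carefully, so that the inertia of \(\widetilde K\) comes out as that of \(-C\) rather than that of \(C\).
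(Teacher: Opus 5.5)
Your proposal is correct and follows the paper's proof step for step: the direct computation \((RK_f^{-1}R^{\mathsf T})(P,Q)=-2\hbar^2\tau(\iota(P),\iota(Q))\) combined with Theorem~\ref{thm:killing-form-identity}, the rank bound \(\operatorname{rank}C\le\dim W\) with \(\operatorname{rank}C=N(2N+1)\) from nondegeneracy, and, at equality, invertibility of \(\widetilde R\) together with Sylvester's law. You are more explicit than the paper on three points: that \(K_f\) is invertible, that \(\widetilde K\) and \(\widetilde K^{-1}\) have the same inertia, and that \(-C\) has inertia \((N^2,N^2+N)\) via the \(\tau\)-orthogonal Cartan splitting; these fill in details of the same argument rather than change it.
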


\begin{proof}
The definition of \(R\) gives
\begin{equation}
 (RK_f^{-1}R^{\mathsf T})(P,Q)
 =8K_f\bigl(\iota_\hbar(P),\iota_\hbar(Q)\bigr)
 =-2\hbar^2\tau\bigl(\iota(P),\iota(Q)\bigr),
\end{equation}
which proves Eq.~\eqref{eq:fiberschur} via Eq.~\eqref{eq:killing}.  For an
alternative realization on an auxiliary space \(W\), the correction factors
as \(C=-\widetilde R\widetilde K^{-1}\widetilde R^{\mathsf T}\).  Hence
\(\operatorname{rank}C\leq\dim W\).  Nondegeneracy of the Killing form gives
\(\operatorname{rank}C=\dim V=N(2N+1)\).  At equality,
\(\widetilde R\) is invertible, so Sylvester's law of inertia fixes
\(\operatorname{In}(\widetilde K)=(N^2,N^2+N)\), the inertia of
\(K_f=-\tau\).
\end{proof}

\begin{theorem}[Correct covariant vertical block]
\label{thm:vertical-block}
Let \(G=(G^*)^{-1}\).  The restriction of \(G\) to vertical coordinate
vectors is governed by the second Schur complement,
\begin{equation}
 G_{\rm vert}
 =\left(K_f-R^{\mathsf T}(\mathcal F_\Sigma^*)^{-1}R\right)^{-1},
 \label{eq:vertical}
\end{equation}
not by \(K_f^{-1}\).  Throughout the faithful domain,
\begin{equation}
 \operatorname{In}(G_{\rm vert})=(N^2,N^2+N).
 \label{eq:vertical-inertia}
\end{equation}
Only this inertia is fixed globally.  The positive and negative subspaces of
\(G_{\rm vert}\) need not coincide with the fixed Cartan subspaces
\(\mathfrak k\) and \(\mathfrak p\) away from a Williamson-adapted frame.
\end{theorem}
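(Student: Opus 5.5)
The proof has three parts: the block-inversion formula, a continuity argument for the inertia, and a counterexample for the frame dependence.

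\emph{Block inversion.} For a symmetric block matrix with invertible \(A=\mathcal F_\Sigma^*\), the standard formula gives the lower-right block of the inverse as the inverse of the Schur complement of \(A\). This yields Eq.~\eqref{eq:vertical}. Two invertibility facts are needed. First, \(\mathcal F_\Sigma^*\) is invertible on the faithful domain, being \(4g_{\rm FR}^*\) with \(\Sigma>0\). Second, \(G^*\) is invertible, because \(K_f=-\tau\) is nondegenerate and its Schur complement is \(g_{{\rm B},\Sigma}^*\), which is positive definite on \(\mathcal M_Q^\circ\) by Theorem~\ref{thm:minimal-lift}. The identity \(\det G^*=\det K_f\cdot\det g_{\rm B}^*\) (in a basis) then gives invertibility, so \(K_f-R^{\mathsf T}(\mathcal F^*)^{-1}R\) is invertible. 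I will state explicitly that \(K_f^{-1}\) is the vertical block only when \(R=0\), i.e.\ at \(\hbar=0\).

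\emph{Inertia.} I will use Haynsworth additivity twice on \(G^*\). Eliminating \(K_f\) gives
\[
\operatorname{In}(G^*)=\operatorname{In}(K_f)+\operatorname{In}(g_{\rm B}^*)=(N^2,N^2+N)+(N(2N+1),0).
\]
Here \(\operatorname{In}(K_f)\) comes from the Cartan signs in the Symplectic-intertwiner lemma: \(-\tau\) is positive on \(\mathfrak k\) (dimension \(N^2\)) and negative on \(\mathfrak p\) (dimension \(N^2+N\)). Eliminating \(\mathcal F^*\) instead gives
\[
\operatorname{In}(G^*)=(N(2N+1),0)+\operatorname{In}\bigl(K_f-R^{\mathsf T}(\mathcal F^*)^{-1}R\bigr).
\]
Subtracting, the second Schur complement has inertia \((N^2,N^2+N)\). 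Its inverse \(G_{\rm vert}\) has the same inertia, since a nondegenerate form and its inverse are congruent via the form itself. No continuity argument is needed, and the result holds pointwise on the whole faithful domain.

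\emph{Frame dependence.} This is the main obstacle, since it requires an explicit example. I will compute \(R^{\mathsf T}(\mathcal F^*)^{-1}R\) as a form on \(\mathfrak g\). Writing \(\pi=\Om P'\), one gets \(R(P,\pi)\propto\hbar\,\Tr(P\,\Om\cdot)\)-type pairings. With \((\mathcal F^*)^{-1}\) corresponding to \(P\mapsto\tfrac18\Sigma^{-1}P\Sigma^{-1}\), the correction becomes \(\propto\hbar^2\Tr(\Om^{\mathsf T}\pi\,\Sigma^{-1}\,\Om^{\mathsf T}\pi\,\Sigma^{-1})\)-like, which depends on \(\Sigma\) through conjugation.

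At a Williamson representative \(D\), this correction commutes with the involution \(\pi\mapsto\Om\pi\Om^{-1}\) (the Cartan involution on \(\mathfrak{sp}\)), because \(D\) commutes with \(\Om\). Hence \(\mathfrak k\) and \(\mathfrak p\) are orthogonal eigen-sectors there, and the signs match the Cartan splitting. For the counterexample I will take \(N=1\) and a squeezed \(\Sigma=SDS^{\mathsf T}\) with \(S=\operatorname{diag}(e^{r},e^{-r})\). I will show that the \(\mathfrak k\)–\(\mathfrak p\) cross term of the second Schur complement is nonzero. For example, I will pair \(\pi_{\mathfrak k}=\Om\) with a \(\mathfrak p\) generator such as \(\operatorname{diag}(1,-1)\) and obtain a value proportional to \(\hbar^2\sinh(2r)\)-type expressions divided by \(\nu^2\). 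Then \(\mathfrak k\) is not an invariant subspace of \(G_{\rm vert}\), so its eigenspaces differ from the fixed Cartan subspaces. Only the inertia, fixed by the Haynsworth argument, survives globally.

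I expect the main work to be the bookkeeping of factors in this \(2\times2\) example. I will verify only nonvanishing of the cross term, not its exact value.
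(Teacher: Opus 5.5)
The block-inversion and inertia steps are the paper's argument: invert the base block first, apply Haynsworth additivity through each diagonal block, and use that a nondegenerate form is congruent to its inverse. The frame-dependence step, however, fails as written. In \(\mathfrak{sp}(2,\RR)\) one has \(\operatorname{diag}(1,-1)=\Om X=\iota(X)\), which is the \(E_\times\) direction. For \(\Sigma=\nu\operatorname{diag}(e^{2r},e^{-2r})\), the second Schur complement \(S_f=K_f-R^{\mathsf T}(\mathcal F_\Sigma^*)^{-1}R\) vanishes on the pair you name, \((\Om,\iota(X))\):
\begin{itemize}
\item \(K_f(\Om,\Om X)=-\Tr(\Om^2X)=\Tr X=0\);
\item the correction is \(\tfrac{\hbar^2}{4}\Tr(\pi\Om\,\Sigma^{-1}\pi'\Om\,\Sigma^{-1})\), not the \(\Om^{\mathsf T}\pi\) expression you wrote, and it reduces to a multiple of \(\Tr(\Sigma^{-2}X)=0\).
\end{itemize}
This vanishing is forced by the reflection \(P\mapsto ZPZ\). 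It fixes \(\Sigma\), \(\mathcal F_\Sigma^*\) and \(\tau\circ(\iota\times\iota)\), but it is even on \(\Id\) and odd on \(X\). The coupling actually lives in the other noncompact direction, \(\iota(E_+)\propto\begin{pmatrix}0&1\\1&0\end{pmatrix}\). Take the bases \(E_i\) of \(V\) and \(\iota(E_i)\) of \(\mathfrak g\). Then \(R=\sqrt2\,\hbar K_f\), so \(S_f=K_f-x^{-2}M\), where \(M\) is the bracketed matrix of Appendix~\ref{app:schur}. The \(\mathfrak k\)--\(\mathfrak p\) entry is \(-\sinh(4r)/x^2\), not a \(\sinh(2r)\) term.

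With that repair, you do show that \(\mathfrak k\oplus\mathfrak p\) is not a \(G_{\rm vert}\)-orthogonal splitting for \(r\neq0\), so the spectral subspaces in Cartan-orthonormal coordinates move. This is still weaker than what the paper proves. The cross term is nonzero for every \(r\neq0\). Yet for \(\cosh(4r)<x^2\), \(G_{\rm vert}\) is still positive on \(\mathfrak k\) and negative definite on \(\mathfrak p\), so the Cartan sectors keep their Cartan signs.

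The paper instead inverts the \((E_0,E_+)\) block and obtains \(G_{\rm vert}(\iota(E_+),\iota(E_+))=x^2[\cosh(4r)-x^2]/(x^4-1)\). This is positive once \(\cosh(4r)>x^2\), so a fixed vector of \(\mathfrak p\) acquires the wrong sign. That sign change is the sense in which Appendix~\ref{app:schur} says the sign subspaces are not fixed.

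No computation on \(S_f\) alone can exhibit this. Since \((\mathcal F_\Sigma^*)^{-1}\) is positive definite, \(R^{\mathsf T}(\mathcal F_\Sigma^*)^{-1}R\) is positive semidefinite. Hence \(S_f|_{\mathfrak p}=K_f|_{\mathfrak p}-R^{\mathsf T}(\mathcal F_\Sigma^*)^{-1}R|_{\mathfrak p}\) is negative definite at every faithful state. The sign flip is therefore a property of the inverse \(G_{\rm vert}=S_f^{-1}\), which is why the statement is about the inverse of the second Schur complement.

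The same observation also fills a gap in your Williamson-frame remark, where you assert that the signs match the Cartan splitting without proof. There \(S_f\) is block diagonal in \(\mathfrak k\oplus\mathfrak p\) and \(S_f|_{\mathfrak p}<0\). The inertia count \((N^2,N^2+N)\) then forces \(S_f|_{\mathfrak k}>0\).
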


\begin{proof}
Block inversion with the base block inverted first gives
Eq.~\eqref{eq:vertical}.  Put
\(S_f=K_f-R^{\mathsf T}(\mathcal F_\Sigma^*)^{-1}R\).  Haynsworth inertia
additivity applied once through the fiber block and once through the base
block gives
\begin{align}
 \operatorname{In}(G^*)&=\operatorname{In}(K_f)
  +\operatorname{In}(g_{{\rm B},\Sigma}^*),\\
 \operatorname{In}(G^*)&=\operatorname{In}(\mathcal F_\Sigma^*)
  +\operatorname{In}(S_f).
\end{align}
Because \(g_{{\rm B},\Sigma}^*\) and \(\mathcal F_\Sigma^*\) are positive
definite and have the same dimension,
\(\operatorname{In}(S_f)=\operatorname{In}(K_f)\).  Inversion preserves
inertia \cite{Haynsworth1968,Zhang2005}.  The one-mode calculation in
Appendix~\ref{app:schur} proves that this statement does not fix the Cartan
sign subspaces.
\end{proof}

The construction is therefore an exact algebraic block realization.  It
identifies the minimal auxiliary dimension and inertia required to reproduce
the Killing correction and determines the true covariant vertical block.
By itself, it does not identify the auxiliary variables with environmental
modes, a purification Hilbert space, or physical channel degrees of freedom.
Such an interpretation would require an additional Gaussian dilation or
control model.

\subsection{Scope and curvature outlook}
\label{subsec:scope}

The covariance geometry fixes an admissible domain, invariant sector weights,
and boundary nullities; it does not choose a measurement or a state-update
law.  It also does not diagnose non-Gaussianity, since states with different
higher moments may share the same covariance.  Likewise, the indefinite
Killing block is not by itself a positive Nielsen circuit cost.  Circuit
penalties, purification control, and dynamical thermodynamics require
additional structures and are left to companion work.

The Killing characterization does, however, organize a natural curvature
program.  The full Gaussian Bures metric is not homogeneous because its
coefficients depend on the Williamson spectrum.  On each fixed-spectrum
symplectic orbit, the restricted metric is invariant, so its connection and
curvature can be organized using homogeneous-space formulas and the Lie
brackets of \(\mathfrak k\) and \(\mathfrak p\).  The ambient calculation
must additionally include spectral derivatives and radial--orbit coupling.
The present results provide the algebraic foundation for that calculation,
but do not replace it.

\section{Conclusion}

The Gaussian Bures covariance cometric is the lift-normalized Fisher
cometric plus a state-independent Killing term.  Symplectic invariance fixes
the algebraic form up to scale, while the canonical commutation relations
fix its Bures coefficient.  Within the normalized symmetric Petz family,
requiring this state-independent invariant correction selects Bures
uniquely.

The operational layer is distinct.  The Killing correction is an intrinsic
indefinite tensor on covariance covectors, not the positive information
deficit of a chosen measurement.  Ideal heterodyne detection nevertheless
provides an exact one-mode comparison: it becomes asymptotically efficient
at high noise, while its boundary ratio vanishes because its information is
finite relative to a divergent radial QFI.  The Williamson spectrum supplies
the complementary global result: \(m\) pure modes produce an
\(m^2\)-dimensional cometric kernel isomorphic to \(\mathfrak u(m)\), while
the pure Gaussian orbit remains nondegenerate.  Together, these results demonstrate that Gaussian quantum geometry is not an arbitrary modification of classical Fisher statistics, but the unique Petz geometry selected by invariant phase-space structure.

\appendix

\section{Derivation of the one-mode QFI}
\label{app:onemode}

By covariance under fixed symplectic changes of frame, write the one-mode
state as
\(\Sigma(\nu,r,\alpha)=\nu R(\alpha)D R(\alpha)^{\mathsf T}\), where
\(D=\operatorname{diag}(e^{2r},e^{-2r})\) and \(R(\alpha)\) is the
two-dimensional rotation matrix. Setting \(\alpha=0\) gives
\(\Sigma=\nu D\). For the radial tangent \(\partial_\nu\Sigma=D\), the
gain \(\mathfrak G_\nu=D^{-1}/(\nu^2-\hbar^2/4)\) solves
Eq.~\eqref{eq:sldgain} because \(\Om D^{-1}\Om=-D\). Therefore
\(F_{\nu\nu}=\frac12\Tr(\mathfrak G_\nu D)=1/(\nu^2-\hbar^2/4)\).

The squeezing and rotation tangents are \(\partial_r\Sigma=2\nu\operatorname{diag}(e^{2r},-e^{-2r})\) and \(\partial_\alpha\Sigma=2\nu\sinh(2r)X\), where \(X\) is the off-diagonal Pauli matrix. The diagonal-traceless and off-diagonal ans\"atze for their gains yield \(F_{rr}=4\nu^2/(\nu^2+\hbar^2/4)\) and \(F_{\alpha\alpha}=4\nu^2\sinh^2(2r)/(\nu^2+\hbar^2/4)\). The trace inner products between these sectors vanish pairwise, meaning the QFI matrix is diagonal. Since \(g_{\rm B}=F_{\rm SLD}/4\), the radial line element is \(d\nu/[2\sqrt{\nu^2-\hbar^2/4}]\). Integrating this from \(\nu_0>\hbar/2\) to the boundary gives the finite radial Bures distance \(\ell=\frac12\operatorname{arcosh}(2\nu_0/\hbar)\).

\section{Frame dependence of the Schur sign subspaces}
\label{app:schur}

To see explicitly why the inertia calculation does not fix the sign
subspaces, consider the one-mode covariance
\begin{equation}
 \Sigma=\nu\operatorname{diag}(e^{2r},e^{-2r}),\qquad
 x=\frac{2\nu}{\hbar}>1,
\end{equation}
and use the ordered basis \((E_0,E_\times,E_+)\) from
Eq.~\eqref{eq:auxiliary-basis}.  With
\(c=\cosh(4r)\) and \(s=\sinh(4r)\),
\begin{equation}
 \mathcal F_\Sigma^*=8\nu^2
 \begin{pmatrix}c&0&s\\0&1&0\\s&0&c\end{pmatrix},
 \qquad K_f=\operatorname{diag}(1,-1,-1).
\end{equation}
Direct inversion of the second Schur complement gives
\begin{equation}
 G_{\rm vert}\bigl(\iota(E_+),\iota(E_+)\bigr)
 =\frac{x^2[\cosh(4r)-x^2]}{x^4-1}.
 \label{eq:squeezed-counterexample}
\end{equation}
Although \(\iota(E_+)\) lies in the fixed noncompact Cartan sector, this
quantity is positive whenever \(\cosh(4r)>x^2\).  The fixed basis vector can
therefore change sign while the total vertical inertia remains \((1,2)\).

\section*{Acknowledgments}
The author acknowledges the use of AI assistants for structural brainstorming, language refinement, and \LaTeX{} editing during preparation of the manuscript. The author bears full responsibility for all arguments, equations, and citations.

\bibliographystyle{unsrtnat}
\bibliography{foundation_references}

@book{Amari2016,
  author    = {Shun-ichi Amari},
  title     = {Information Geometry and Its Applications},
  publisher = {Springer},
  address   = {Tokyo},
  year      = {2016},
  doi       = {10.1007/978-4-431-55978-8}
}

@article{Banchi2015,
  author  = {Leonardo Banchi and Samuel L. Braunstein and Stefano Pirandola},
  title   = {Quantum Fidelity for Arbitrary Gaussian States},
  journal = {Physical Review Letters},
  volume  = {115},
  pages   = {260501},
  year    = {2015},
  doi     = {10.1103/PhysRevLett.115.260501}
}

@book{Bengtsson2017,
  author    = {Ingemar Bengtsson and Karol {\.{Z}}yczkowski},
  title     = {Geometry of Quantum States},
  edition   = {2},
  publisher = {Cambridge University Press},
  year      = {2017},
  doi       = {10.1017/9781139207010}
}

@article{Bhatia2019,
  author  = {Rajendra Bhatia and Tanvi Jain and Yongdo Lim},
  title   = {On the Bures--Wasserstein Distance Between Positive Definite Matrices},
  journal = {Expositiones Mathematicae},
  volume  = {37},
  number  = {2},
  pages   = {165--191},
  year    = {2019},
  doi     = {10.1016/j.exmath.2018.01.002}
}

@article{Braunstein1994,
  author  = {Samuel L. Braunstein and Carlton M. Caves},
  title   = {Statistical Distance and the Geometry of Quantum States},
  journal = {Physical Review Letters},
  volume  = {72},
  pages   = {3439--3443},
  year    = {1994},
  doi     = {10.1103/PhysRevLett.72.3439}
}

@article{Bures1969,
  author  = {Donald Bures},
  title   = {An Extension of Kakutani's Theorem on Infinite Product Measures to the Tensor Product of Semifinite $W^*$-Algebras},
  journal = {Transactions of the American Mathematical Society},
  volume  = {135},
  pages   = {199--212},
  year    = {1969},
  doi     = {10.1090/S0002-9947-1969-0236719-2}
}

@article{Chatterjee2026,
  author  = {Kaustav Chatterjee and Tanmoy Pandit and Varinder Singh and Pritam Chattopadhyay and Ulrik Lund Andersen},
  title   = {Even--Odd Splitting of the Gaussian Quantum Fisher Information: From Symplectic Geometry to Metrology},
  journal = {Quantum Science and Technology},
  volume  = {11},
  number  = {4},
  pages   = {045005},
  year    = {2026},
  doi     = {10.1088/2058-9565/ae92ac}
}

@book{deGosson2006,
  author    = {Maurice A. de Gosson},
  title     = {Symplectic Geometry and Quantum Mechanics},
  publisher = {Birkh\"auser},
  year      = {2006},
  doi       = {10.1007/3-7643-7575-2}
}

@article{Haynsworth1968,
  author  = {Emilie V. Haynsworth},
  title   = {Determination of the Inertia of a Partitioned Hermitian Matrix},
  journal = {Linear Algebra and Its Applications},
  volume  = {1},
  number  = {1},
  pages   = {73--81},
  year    = {1968},
  doi     = {10.1016/0024-3795(68)90050-5}
}

@book{Helgason1978,
  author    = {Sigurdur Helgason},
  title     = {Differential Geometry, Lie Groups, and Symmetric Spaces},
  publisher = {Academic Press},
  address   = {New York},
  year      = {1978}
}

@article{HuangWilde2026,
  author  = {Zixin Huang and Mark M. Wilde},
  title   = {Information Geometry of Bosonic Gaussian Thermal States},
  journal = {Physical Review A},
  volume  = {114},
  pages   = {022437},
  year    = {2026},
  doi     = {10.1103/t48x-g7xn}
}

@article{Petz1996,
  author  = {D\'enes Petz},
  title   = {Monotone Metrics on Matrix Spaces},
  journal = {Linear Algebra and Its Applications},
  volume  = {244},
  pages   = {81--96},
  year    = {1996},
  doi     = {10.1016/0024-3795(94)00211-8}
}

@article{PetzSudar1996,
  author  = {D\'enes Petz and Csaba Sud\'ar},
  title   = {Geometries of Quantum States},
  journal = {Journal of Mathematical Physics},
  volume  = {37},
  pages   = {2662--2673},
  year    = {1996},
  doi     = {10.1063/1.531535}
}

@article{Pinel2012,
  author  = {Olivier Pinel and Julien Fade and Daniel Braun and Pu Jian and Nicolas Treps and Claude Fabre},
  title   = {Ultimate Sensitivity of Precision Measurements with Intense Gaussian Quantum Light: A Multimodal Approach},
  journal = {Physical Review A},
  volume  = {85},
  pages   = {010101(R)},
  year    = {2012},
  doi     = {10.1103/PhysRevA.85.010101}
}

@article{Safranek2017,
  author  = {Dominik \v{S}afr\'anek},
  title   = {Discontinuities of the Quantum Fisher Information and the Bures Metric},
  journal = {Physical Review A},
  volume  = {95},
  pages   = {052320},
  year    = {2017},
  doi     = {10.1103/PhysRevA.95.052320}
}

@article{Safranek2019,
  author  = {Dominik \v{S}afr\'anek},
  title   = {Estimation of Gaussian Quantum States},
  journal = {Journal of Physics A: Mathematical and Theoretical},
  volume  = {52},
  pages   = {035304},
  year    = {2019},
  doi     = {10.1088/1751-8121/aaf068}
}

@article{Uhlmann1976,
  author  = {Armin Uhlmann},
  title   = {The Transition Probability in the State Space of a $^*$-Algebra},
  journal = {Reports on Mathematical Physics},
  volume  = {9},
  number  = {2},
  pages   = {273--279},
  year    = {1976},
  doi     = {10.1016/0034-4877(76)90060-4}
}

@article{Weedbrook2012,
  author  = {Christian Weedbrook and Stefano Pirandola and Ra\'ul Garc\'ia-Patr\'on and Nicolas J. Cerf and Timothy C. Ralph and Jeffrey H. Shapiro and Seth Lloyd},
  title   = {Gaussian Quantum Information},
  journal = {Reviews of Modern Physics},
  volume  = {84},
  pages   = {621--669},
  year    = {2012},
  doi     = {10.1103/RevModPhys.84.621}
}

@article{Williamson1936,
  author  = {John Williamson},
  title   = {On the Algebraic Problem Concerning the Normal Forms of Linear Dynamical Systems},
  journal = {American Journal of Mathematics},
  volume  = {58},
  number  = {1},
  pages   = {141--163},
  year    = {1936},
  doi     = {10.2307/2371062}
}

@book{Zhang2005,
  editor    = {Fuzhen Zhang},
  title     = {The Schur Complement and Its Applications},
  publisher = {Springer},
  address   = {New York},
  year      = {2005},
  doi       = {10.1007/b105056}
}

@article{Monras2013,
  author = {Monras, Alex},
  title = {Phase Space Formalism for Quantum Estimation of Gaussian States},
  journal = {arXiv preprint arXiv:1303.3682},
  year = {2013},
  eprint = {1303.3682},
  archivePrefix = {arXiv},
  primaryClass = {quant-ph}
}

@article{Malago2018,
  author = {Malag{\`o}, Luigi and Montrucchio, Luigi and Pistone, Giovanni},
  title = {Wasserstein Riemannian Geometry of Gaussian Densities},
  journal = {Information Geometry},
  volume = {1},
  number = {2},
  pages = {137--179},
  year = {2018},
  doi = {10.1007/s41884-018-0014-4}
}

\end{document}